\documentclass[letterpaper]{article} 
\usepackage{aaai2026}  
\usepackage{times}  
\usepackage{helvet}  
\usepackage{courier}  
\usepackage[hyphens]{url}  
\usepackage{graphicx} 
\usepackage{multirow} 
\usepackage{tabularx} 
\usepackage{amsmath} 
\usepackage{amssymb} 
\usepackage{placeins} 
\usepackage{subcaption} 
\usepackage{algorithm}
\usepackage[noend]{algorithmic}
\usepackage{amsthm}
\theoremstyle{definition}
\newtheorem{definition}{Definition}
\theoremstyle{plain}
\newtheorem{theorem}{Theorem}
\newtheorem{lemma}{Lemma}
\newtheorem{corollary}{Corollary}
\newtheorem{proposition}{Proposition}
\usepackage{natbib}  
\usepackage{caption} 
\nocopyright

\title{Fixed-Haven Reservation for Online Multi-Agent Pickup and Delivery \\ in Dense Warehouses}
\author{
    Taisei Hirayama$^1$, Kohei Yoshida$^2$, Hiroki Sakaji$^1$, and Itsuki Noda$^1$
}
\affiliations{
    $^1$Hokkaido University, Sapporo 060-0808, Japan\\
    $^2$Toyota Industries Corporation, Aichi 474-8601, Japan\\
    hirayama.h77@gmail.com, \{sakaji, i.noda\}@ist.hokudai.ac.jp
}

\begin{document}

\maketitle

\begin{abstract}
Dense warehouses often contain single-lane aisles, dead ends, and tree-like guidepaths that leave little room for idle agents to wait without blocking others.
Existing Multi-Agent Pickup and Delivery (MAPD) guarantees for completing all finitely released tasks typically rely on extra waiting endpoints that planned paths can avoid, or on biconnected topology; these assumptions may fail in such layouts.
We study fixed-Haven reservation for online MAPD, where pickup-delivery tasks are released over time.
Each agent owns a fixed Safe Haven (Haven for short), usually its start cell, that only the owner may occupy and that other agents treat as blocked.
For finite task releases, we prove that this fixed-Haven contract completes all released tasks under Haven-Reachability and explicit planning/progress assumptions.
We implement the contract in SHARP, a Safe-Haven Retreat Planner that keeps every busy or retreating agent on a collision-free reserved route ending at its Haven.
We compare SHARP with representative TP and PIBT-family MAPD baselines: Token Passing (TP), Priority Inheritance with Backtracking (PIBT), and PIBT with Temporary Priority and Temporary Avoidance (PIBTTP-TA) for biconnected main areas with attached trees.
In the robustness sweep, SHARP is the only method with 100\% success on all tested configurations, at substantially higher centralized planning cost on tree-like layouts.
A TP-style fixed-home-return counterfactual with full-route validation also recovers robustness on tested tree-like layouts, suggesting that fixed return is a central robustness mechanism there.
A no-overwrite variant shows that disabling mid-retreat reassignment worsens service time (release-to-delivery latency) by 1.89$\times$ and makespan by 1.53$\times$ in the tested high-load tree condition.
\end{abstract}

\section{Introduction}

Multi-Agent Path Finding (MAPF) asks agents to reach goals without colliding~\citep{stern2019multi}.
Multi-Agent Pickup and Delivery (MAPD) extends this model to online pickup-delivery tasks: tasks are released over time, each task has a pickup and a delivery location, and an assigned agent must visit them in order~\citep{ma2017lifelong, ma2019lifelong}.
We focus on the finite task horizon setting, where only finitely many tasks are released.
We use \emph{complete} in this finite-release MAPD sense: every released task is delivered in finite time.

Warehouse robot fleets are a canonical MAPD setting~\citep{wurman2008coordinating}; there, MAPD is difficult not only because paths must avoid collisions, but also because agents that are waiting, returning, or newly assigned can block narrow passages.
This issue is acute in layouts with single-lane aisles, dead ends, or tree-like guidepaths, which arise in warehouse and storage/retrieval abstractions~\citep{azadeh2017robotized, roy2017multitier}.
In such graphs, a planner needs a place where every agent can eventually clear the task area without relying on wide detours.

Existing MAPD guarantees address this problem through structural assumptions.
Token Passing (TP) is complete on well-formed MAPD instances, where agents have enough separate waiting endpoints and paths between endpoints can avoid other endpoints~\citep{ma2017lifelong}.
Priority Inheritance with Backtracking (PIBT) guarantees reachability on biconnected graphs, and the Temporary Avoidance variant of PIBT with Temporary Priority (PIBTTP-TA) extends this idea to a biconnected main area with attached trees under additional restrictions~\citep{okumura2019priority, fujitani2022priority}.
These assumptions are valuable, but they do not cover all dense warehouse guidepath graphs.

SHARP, a Safe-Haven Retreat Planner, studies a complementary contract.
Each agent owns a fixed Safe Haven, normally its start cell; non-owners are not allowed to occupy it.
Whenever an agent is executing a task or returning after delivery, SHARP maintains a collision-free reservation from the agent's current location through any unfinished task waypoint and then back to its Haven.
A retreating agent may accept a new task before reaching its Haven, but only if the planner can replace the remaining retreat suffix with a newly validated pickup-delivery-Haven route.

The main contribution is a fixed-Haven contract and a finite-release completion guarantee under Haven-Reachability.
SHARP is one centralized graph-level implementation of this contract: it validates routes with Safe Interval Path Planning (SIPP), keeps owner-only Havens protected, and keeps committed tasks with their assigned agents.
We then use experiments to separate three effects: which layout classes the fixed-Haven contract covers against representative baselines, whether a TP-style policy with fixed-home return and full-route validation recovers robustness, and how much efficiency comes from allowing mid-retreat reassignment via suffix overwrite.
SHARP is a graph-level dispatch-and-routing component for lower-level controllers; perception, localization, and continuous control are out of scope.

\section{Related Work and Positioning}
\label{sec:related}

\subsection{MAPD variants and structural assumptions}

MAPD is \emph{online} when tasks are revealed over time, and \emph{offline} when the full task set is known in advance~\citep{ma2017lifelong, liu2019taskpathplanning}.
The standard MAPD task has two goal locations, pickup and delivery; multi-goal MAPD generalizes this to task sequences with more goals~\citep{xu2022multigoal}.
Other MAPD variants add operational constraints such as capacity and integrated task-assignment/path-planning decisions~\citep{chen2021integrated}.
We study standard online MAPD with finite task releases.

TP~\citep{ma2017lifelong} is complete on \emph{well-formed} MAPD instances.
In that definition, endpoints are task locations or separate resting/parking locations: task endpoints are pickup/delivery locations, non-task endpoints are separate waiting locations, there are enough non-task endpoints for the agents, and any two endpoints are connected by a path that avoids all other endpoints.
Related well-formed infrastructure assumptions also appear in online multi-robot trajectory planning~\citep{cap2015complete}.
This condition is elegant but restrictive when pickup and delivery locations lie on narrow corridors, because those locations become endpoints that paths must avoid.

PIBT~\citep{okumura2019priority} guarantees reachability on graphs where adjacent vertices lie on cycles, with biconnected graphs as the main undirected case.
\citet{fujitani2022priority} extend the idea to a biconnected main area with attached trees via PIBT with Temporary Priority (PIBTTP) and its Temporary Avoidance (TA) variant PIBTTP-TA.
The tree extension also comes with task-placement and motion restrictions inside trees.
These assumptions differ from well-formedness, but they are still graph-structure assumptions rather than reservation-invariant assumptions.

\subsection{Fallback, parking, and standby}

Several MAPD methods already use the intuition that agents need safe fallback locations.
\citet{liu2019taskpathplanning} reserve dummy paths to each agent's parking location in offline MAPD and note that a parking-avoidance variant of well-formedness is sufficient for their proof.
\citet{xu2022multigoal} use dummy endpoints in multi-goal MAPD and may reassign them from an endpoint pool.
\citet{yamauchi2022standby} choose standby nodes dynamically in maze-like few-endpoint environments, and home-location variants have also been studied in multi-item logistics~\citep{farinelli2020decentralized, contini2021coordination}.
These are relevant precedents, but SHARP's contract is different: it protects only fixed owner-only Havens chosen from agent starts, maintains an executable return suffix to the owner's Haven, and proves completion through quiescent states in which the task area clears.
More specifically, SHARP differs in what must be avoided structurally, where fallback cells come from, what invariant is maintained during execution, and how completeness is argued:
\begin{itemize}
\item only per-agent exclusive Havens, rather than all endpoints, are protected;
\item each Haven is fixed to an agent start cell, so no separate shared fallback pool is required;
\item every busy agent keeps an executable retreat reservation to its own Haven at all times; and
\item the proof uses convergence to a quiescent state in which the task area has cleared.
\end{itemize}
Table~\ref{tab:positioning} summarizes the closest explicit fallback contracts and completeness-style assumptions.
In the remainder, ``fixed-Haven framework'' refers to the broader guarantee framework, whereas ``SHARP'' refers to the specific centralized policy studied here.

\paragraph{TP-family reference points.}
We use TP-family names carefully.
The original \emph{TP} baseline~\citep{ma2017lifelong} uses endpoint-resting semantics: agents not executing a task wait at endpoints that other planned paths must avoid.
\emph{Token Passing with SIPP and Reservation Table (TP-SIPPwRT)}~\citep{ma2019lifelong} replaces TP's low-level planner with Safe Interval Path Planning with Reservation Table, but retains TP's endpoint-resting contract.
Token Passing with Task Swaps (TPTS) adds pre-pickup task reassignment~\citep{ma2019lifelong}; SHARP does not implement that extension and keeps the no-transfer commitment policy for committed tasks.

Our \emph{TP-home-return} and \emph{TP-SIPP-home-return} baselines are diagnostic counterfactuals.
They keep TP's token order and admissible-task filter, assign each agent a fixed return cell at its start, require current$\rightarrow$pickup$\rightarrow$delivery$\rightarrow$return-cell feasibility before commitment, and force agents away from that cell to return before reconsidering tasks.
They are meant to isolate fixed return, not task swaps or a general TP-family dominance claim.
We use \emph{home} only for the analogous fixed return cell in these TP-style counterfactuals; \emph{Haven} denotes SHARP's invariant-carrying owner-only fallback cell.

\begin{table*}[!t]
\centering
\caption{Positioning relative to fallback-based MAPD mechanisms. ``Dense endpoints'' means pickup/delivery locations may lie on narrow passageways; ``parking-limited'' means the guarantee can relax endpoint avoidance only through designated parking/dummy endpoints. Abbreviations: onl.=online, offl.=offline, MG=multi-goal MAPD.}
\label{tab:positioning}
{\footnotesize
\setlength{\tabcolsep}{3pt}
\renewcommand{\arraystretch}{1.08}
\begin{tabularx}{\textwidth}{@{}l c >{\raggedright\arraybackslash}X >{\raggedright\arraybackslash}X >{\raggedright\arraybackslash}X c >{\raggedright\arraybackslash}X@{}}
\hline
\shortstack[c]{Method} &
\shortstack[c]{Setting} &
\shortstack[c]{Fallback\\target} &
\shortstack[c]{Completeness\\assumption} &
\shortstack[c]{Fallback\\cell source} &
\shortstack[c]{Dense endpoints\\in guarantee} &
\shortstack[c]{Comment} \\
\hline
TP~\citep{ma2017lifelong}
& onl.
& endpoint holding
& well-formed MAPD
& non-task endpoints
& no
& Complete online MAPD with endpoint-avoidance \\

\citet{liu2019taskpathplanning}
& offl.
& designated parking endpoint + dummy path
& well-formed / parking-avoidance variant
& designated parking endpoints
& parking-limited
& Offline dummy-path feasibility; no online suffix overwrite \\

\citet{xu2022multigoal}
& MG
& dummy endpoint from shared endpoint pool
& well-formed MG-MAPD
& shared endpoint pool
& no
& Goal-sequence paths with dynamic dummy reassignment \\

SHARP
& onl.
& exclusive Haven (fixed start cell)
& Haven-Reachable + finite releases + progress conditions
& fixed per-agent starts
& yes
& Every busy agent keeps a Haven retreat reservation \\
\hline
\end{tabularx}
}
\end{table*}
For SHARP, the listed progress conditions are framework-level assumptions discharged for the concrete policy in Section~\ref{sec:proof}.

\section{SHARP Coordination Framework}
\label{sec:framework}

\subsection{Setting}

We consider MAPD in discrete time on an undirected graph $G=(V,E)$, where vertices $V$ are traversable locations and edges $E$ are valid bidirectional moves.
For any vertex subset $S \subseteq V$, $G[S]$ denotes the subgraph induced by $S$.
At each timestep $t\in\mathbb{N}_0:=\{0,1,2,\ldots\}$, an agent may either wait at its current vertex or traverse one incident edge.
A collision occurs when two agents occupy the same vertex at the same timestep or traverse the same edge in opposite directions simultaneously.

Let $A$ be the set of agents.
Let $V_{\mathrm{task}}\subseteq V$ be the task endpoint candidate set known to the planner.
Each task $\tau$ has a pickup location $s_{\tau} \in V$, a delivery location $g_{\tau} \in V$, and a release time $r_{\tau} \in \mathbb{N}_0$.
Every released task satisfies $s_{\tau},g_{\tau}\in V_{\mathrm{task}}$.
Before time $r_{\tau}$, task $\tau$ is unreleased.
Once released, it is pending until committed by the planner, then \emph{in-progress}, and finally \emph{completed} when its committed agent reaches $g_{\tau}$.
For both the theorem and the experiments, visiting $s_{\tau}$ or $g_{\tau}$ completes the corresponding pickup or delivery operation immediately.
Deterministic dwell times could be incorporated by adding reserved waits at task vertices, which SIPP can represent through safe intervals, and by enlarging the segment horizons accordingly; that extension is outside the present proof and evaluation.
The fixed-Haven policies analyzed here, including SHARP, do not transfer a committed task to another agent; this is an algorithmic commitment policy rather than a restriction on the MAPD input instance.
Let $\mathrm{pos}_a(t)$ be agent $a$'s position at timestep $t$.

SHARP assigns each agent $a \in A$ an exclusive Haven $\eta(a) \in V$, and we write $H=\eta(A)=\{\eta(a)\mid a\in A\}$.
In the fixed-Haven version studied here, the Haven is the agent's start cell and does not change during execution.
The Haven is exclusive to its owner: agent $a$ may occupy $\eta(a)$, while every other agent treats $\eta(a)$ as blocked.
This owner-only blocking rule is part of the fixed-Haven policy rather than the static instance class defined next.

For the proof and algorithm description, each agent is always in exactly one of three coordination states: \emph{busy}, \emph{retreating}, or \emph{idle}.
A busy agent is executing an assigned task and keeps a reserved path from its current position via pickup and delivery to its Haven.
A retreating agent has no unfinished assigned task and follows the remaining reserved suffix from its current position back to its Haven; this suffix may be overwritten if a new task is accepted.
An idle agent waits at its Haven with no future reservation outside it.
The owner-only Haven rule enforces idle Haven occupancy, so an agent's own Haven wait does not block its planning.

\subsection{Haven-Reachability}

\begin{definition}[Haven-Reachability]
\label{def:haven_reachability}
A MAPD instance is \emph{Haven-Reachable} if there exists a \emph{task-support region} $W\subseteq V\setminus H$ such that all of the following hold:
\begin{enumerate}
\item \textbf{Task-support connectivity}: all task endpoint candidates in $V_{\mathrm{task}}$ lie in $W$, and $G[W]$ is connected.
\item \textbf{Haven access from the support}: for every agent $a \in A$ and every vertex $v \in W$, there exists a simple path from $v$ to $\eta(a)$ whose internal vertices all lie in $W$.
\item \textbf{Assigned Haven uniqueness}: $\eta:A\to H$ is injective, so each assigned Haven has exactly one owner.
\end{enumerate}
We call any such $W$ a witness support region.
\end{definition}

Haven-Reachability is thus a \emph{static} sufficient condition, not a necessary condition for all task-completing schedules.
The support region $W$ may be the whole non-Haven traversable region, but it need not include non-Haven vertices that are irrelevant to tasks and Haven access.
Haven exclusivity is enforced by SHARP's reservation invariant below, not built into $G$.

\paragraph{Relation to well-formedness.}
Well-formed MAPD~\citep{ma2017lifelong} requires paths between endpoints that avoid other endpoints.
Haven-Reachability protects only the per-agent Havens and does not require a fallback pool separate from the start cells selected as Havens, so dense pickup/delivery vertices may lie on corridors as long as they belong to a connected support region that can reach each Haven.
This is related to the parking-avoidance intuition noted by \citet{liu2019taskpathplanning}, but SHARP applies it online with owner-only fixed Havens and a reservation invariant.

\begin{proposition}[Haven-Reachability without well-formedness]
\label{prop:haven_not_wellformed}
Haven-Reachability does not imply well-formedness.
\end{proposition}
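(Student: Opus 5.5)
We prove the proposition with a single explicit counterexample: a small instance that satisfies Definition~\ref{def:haven_reachability} but violates the well-formedness condition of \citet{ma2017lifelong}. A path graph works, because its corridor forces some path between two endpoints to pass through a third endpoint.

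\textbf{Construction.} Take the path graph $G$ with vertices $h_1 - u - v - w - h_2$ and two agents $a_1,a_2$ whose start cells, and hence Havens, are $\eta(a_1)=h_1$ and $\eta(a_2)=h_2$. Let the task endpoint candidates be $V_{\mathrm{task}}=\{u,v,w\}$. Release a single task with pickup $u$ and delivery $w$ (or any finite release sequence drawn from these vertices). This instance lies squarely in the ``dense endpoints on a single-lane corridor'' regime the paper targets.

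\textbf{Haven-Reachability.} Choose the witness $W=\{u,v,w\}\subseteq V\setminus H$.
\begin{enumerate}
\item $V_{\mathrm{task}}\subseteq W$, and $G[W]$ is the path $u-v-w$, which is connected.
\item From any $x\in W$, the subpath of $G$ from $x$ to $h_1$ (respectively $h_2$) is simple, and all of its internal vertices lie in $\{u,v,w\}=W$.
\item $\eta$ is injective, since $h_1\neq h_2$.
\end{enumerate}
So the instance is Haven-Reachable.

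\textbf{Failure of well-formedness.} Under the well-formed definition, the endpoints include all task endpoints $u,v,w$, together with the non-task endpoints (the agents' parking locations, which here can only be $h_1,h_2$). Every endpoint pair must be joined by a path avoiding all other endpoints. The unique simple path between $u$ and $w$ in $G$ passes through $v$, which is itself an endpoint, so the condition fails.

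This argument does not depend on which vertices are taken as non-task endpoints. Whatever choice is made, $u$, $v$ and $w$ are all task endpoints, and the $u$–$w$ pair already violates the requirement. Any version of the definition that additionally requires enough non-task endpoints is violated too, since enlarging the endpoint set only adds constraints.

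\textbf{Main subtlety.} The one point needing care is to state precisely which well-formedness conditions \citet{ma2017lifelong} impose, as recalled in Section~\ref{sec:related}, and to check that the violated clause is independent of the non-task endpoint choice. Using the $u$–$v$–$w$ task-endpoint triple achieves this, since $v$ blocks the only $u$–$w$ path in every reading of the definition.
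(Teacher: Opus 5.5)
Your proof is correct and is essentially the paper's proof: the same path $h_1 - u - v - w - h_2$ with Havens at the two ends, task endpoints $u,v,w$, and witness $W=\{u,v,w\}$, with well-formedness failing because every $u$--$w$ path passes through the endpoint $v$. You check each Haven-Reachability clause explicitly and note that the violated clause does not depend on the choice of non-task endpoints; this is a slightly more careful write-up of the same argument.
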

\begin{proof}
Take a path $e_1-e_2-e_3$, attach distinct Havens $h_1$ to $e_1$ and $h_2$ to $e_3$, and let $e_1,e_2,e_3$ be task endpoints.
With $W=\{e_1,e_2,e_3\}$, the instance is Haven-Reachable: $G[W]$ is connected, every endpoint can reach each Haven through $W$ until the final Haven edge, and the Havens have distinct owners.
It is not well-formed, because any path from endpoint $e_1$ to endpoint $e_3$ passes through endpoint $e_2$.
\end{proof}

\subsection{Reservation invariant}

SHARP maintains a space-time reservation table $\mathcal{R}$ for future routes of busy and retreating agents, and plans with SIPP~\citep{phillips2011sipp}.
Idle Haven occupancy is handled by the owner-only blocking rule rather than by inserting infinite waits into $\mathcal{R}$.
SIPP searches time-stamped paths by reasoning over intervals during which each vertex is safe to occupy, while avoiding existing reservations.
The crucial invariant is:

\begin{definition}[Safe-Haven Reservation Invariant]
\label{def:reservation_invariant}
At every timestep:
\begin{enumerate}
\item each busy agent reserves a complete collision-free remaining path from its current position through any unvisited task waypoint(s), namely pickup if not yet reached and then delivery, and finally to $\eta(a)$,
\item each retreating agent reserves a collision-free suffix from its current position to its own Haven,
\item reservations contain no vertex or edge-swap conflicts,
\item no agent ever occupies another agent's Haven, and
\item idle agents remain at their own Havens.
\end{enumerate}
\end{definition}

Fix a witness support region $W$ for Haven-Reachability.
We adopt a \emph{per-call relative horizon} $T_{\max}$ for each SIPP invocation inside \textsc{PlanFullPath}, measured from the start time of that segment, and set it to satisfy
\[
T_{\max} \ge \mathrm{diam}(G[W]) + 1,
\]
where $\mathrm{diam}(\cdot)$ denotes graph diameter.
In quiescent configurations, this suffices because the pickup$\rightarrow$delivery segment lies in $G[W]$ and each Haven-touching segment needs at most one additional incident edge from $W$ into the Haven.
This is a per-segment SIPP bound; outside quiescence, SHARP retries failed validations while preserving the current retreat reservation.

\subsection{Task assignment and path validation}

SHARP uses a nearest-pickup assignment rule among agents that are idle or currently retreating.
In each round, every candidate agent proposes its nearest pending task under the static shortest-path distance $\text{dist}(\cdot,\cdot)$ on the traversable graph.
This distance is only a ranking heuristic: it ignores time-space reservations and Haven occupancy restrictions, whereas feasibility is decided only by SIPP validation with all foreign Havens statically blocked for that agent.
The algorithm then commits the feasible pair with minimum pickup distance.
Ties can be broken by any fixed total order over agent and task identifiers; the proof does not depend on the particular order.
In the pseudocode, \textsc{SIPP}$(u,v,t,\mathcal{R},B,T_{\max})$ denotes a SIPP search from $u$ to $v$ starting at time $t$ with reservations $\mathcal{R}$, blocked vertices $B$, and per-call horizon $T_{\max}$.
\textsc{ArrivalTime}$(\pi)$ returns the final timestamp of path $\pi$, and \textsc{Concat}$(\pi_1,\pi_2,\pi_3)$ concatenates time-consistent path segments while removing duplicated boundary states.
A candidate assignment is accepted only if SIPP can validate the full path to pickup, delivery, and Haven.
For a retreating candidate, this validation temporarily removes only that agent's own future retreat suffix from the reservation table; upon commitment, that suffix is overwritten by the new full reservation.
If validation fails, the agent follows its already reserved retreat path and the assignment can be retried later.
In Algorithm~\ref{alg:sharp}, \textsc{CandidateReservations}$(\mathcal{R},a)$ denotes $\mathcal{R}$ with only agent $a$'s own future retreat suffix temporarily ignored when $a$ is retreating.
This prevents the candidate's existing retreat reservation from being treated as a collision with its own replacement route; all other agents' reservations and all foreign Havens remain blocking constraints.
Outside quiescence, this nearest-pickup proposal rule is opportunistic.
A candidate may reject its nearest task even when a farther pending task would validate under the current reservations.
This can affect efficiency, but not the finite-release completeness proof, which relies on progress only after quiescence.

\begin{algorithm}[t]
\caption{SHARP (fixed-Haven version)}
\label{alg:sharp}
\begin{algorithmic}[1]
\STATE Initialize $t \leftarrow 0$, $\mathcal{R} \leftarrow \emptyset$
\WHILE{the run is active}
    \STATE $P \leftarrow$ released pending tasks
    \STATE $I \leftarrow$ idle agents $\cup$ retreating agents
    \WHILE{$I \neq \emptyset$ \AND $P \neq \emptyset$}
        \STATE $c^\star \leftarrow \bot$; $d^\star \leftarrow \infty$
        \FOR{$a \in I$}
            \STATE $\tau \leftarrow \arg\min_{\tau' \in P}\text{dist}(\mathrm{pos}_a(t), s_{\tau'})$
            \STATE $\mathcal{R}_a \leftarrow \textsc{CandidateReservations}(\mathcal{R},a)$
            \STATE $(f,\pi)\leftarrow \textsc{PlanFullPath}(a,\tau,\mathcal{R}_a,T_{\max})$
            \IF{$f$ and $\text{dist}(\mathrm{pos}_a(t), s_{\tau}) < d^\star$}
                \STATE $c^\star \leftarrow (a,\tau,\pi)$; $d^\star \leftarrow \text{dist}(\mathrm{pos}_a(t), s_{\tau})$
            \ENDIF
        \ENDFOR
        \IF{$c^\star = \bot$}
            \STATE \textbf{break}
        \ENDIF
        \STATE Unpack $(a,\tau,\pi) \leftarrow c^\star$
        \STATE If $a$ is retreating, remove only $a$'s future retreat suffix from $\mathcal{R}$
        \STATE Commit $\pi$ to $\mathcal{R}$, assign $\tau$ to $a$, and mark $a$ busy
        \STATE Remove $a$ and $\tau$ from $I$ and $P$
    \ENDWHILE
    \STATE Execute one timestep along reserved paths; $t \leftarrow t+1$
\ENDWHILE
\end{algorithmic}
\end{algorithm}

\begin{algorithm}[t]
\caption{\textsc{PlanFullPath}$(a,\tau,\mathcal{R},T_{\max})$}
\label{alg:plan_full_path}
\begin{algorithmic}[1]
\STATE $t_{\mathrm{start}} \leftarrow$ current timestep
\STATE $x_0 \leftarrow \mathrm{pos}_a(t_{\mathrm{start}})$
\STATE $B_a \leftarrow H \setminus \{\eta(a)\}$
\STATE $(ok_1,\pi_1)\leftarrow \textsc{SIPP}(x_0,s_{\tau},t_{\mathrm{start}},\mathcal{R},B_a,T_{\max})$
\IF{$\neg ok_1$}
    \STATE \textbf{return} $(\text{false}, \emptyset)$
\ENDIF
\STATE $t_1 \leftarrow \textsc{ArrivalTime}(\pi_1)$
\STATE $(ok_2,\pi_2)\leftarrow \textsc{SIPP}(s_{\tau},g_{\tau},t_1,\mathcal{R},B_a,T_{\max})$
\IF{$\neg ok_2$}
    \STATE \textbf{return} $(\text{false}, \emptyset)$
\ENDIF
\STATE $t_2 \leftarrow \textsc{ArrivalTime}(\pi_2)$
\STATE $(ok_3,\pi_3)\leftarrow \textsc{SIPP}(g_{\tau},\eta(a),t_2,\mathcal{R},B_a,T_{\max})$
\IF{$\neg ok_3$}
    \STATE \textbf{return} $(\text{false}, \emptyset)$
\ENDIF
\STATE \textbf{return} $(\text{true}, \text{Concat}(\pi_1,\pi_2,\pi_3))$
\end{algorithmic}
\end{algorithm}

\paragraph{Completeness scope of \textsc{PlanFullPath}.}
\textsc{PlanFullPath} is a conservative sequential validator and does not backtrack from an accepted earlier segment.
Thus a failed validation outside quiescence does not imply that no feasible route exists; it only means SHARP declines that candidate and preserves the agent's current retreat reservation.
The proof only needs successful validation in quiescent configurations, where future reservations outside Havens are empty and each leg can be found independently.

\begin{proposition}[Per-timestep complexity]
\label{prop:complexity}
Let $|P|$ be the number of pending tasks, $C_{\text{SIPP}}(T_{\max})$ the cost of one SIPP call up to horizon $T_{\max}$, and $C_{\text{full}}(T_{\max})$ the cost of one full-path validation, consisting of at most three SIPP calls, each with per-call horizon $T_{\max}$.
Then SHARP requires
\[
O\!\left(\min(|A|,|P|)\,|A|\,\big(|P|+C_{\text{full}}(T_{\max})\big)\right)
\]
time per timestep, plus distance-precomputation cost.
This follows because at most $\min(|A|,|P|)$ pairs can be committed per timestep, each round scans up to $|A|$ candidate agents, and each scan performs a nearest-task search over $|P|$ pending tasks plus one full-path SIPP validation.
\end{proposition}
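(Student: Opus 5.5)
The plan is to bound the work done in a single timestep by counting iterations of the nested loops in Algorithm~\ref{alg:sharp} and charging each iteration its cost. Timestep-independent preprocessing is excluded: the all-pairs (or all-to-$V_{\mathrm{task}}$) shortest-path distances $\text{dist}(\cdot,\cdot)$ are computed once before execution, so each distance query inside the loop costs $O(1)$. Building $P$ and $I$ at the start of the timestep takes $O(|A|+|P|)$, which the final bound absorbs.

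\textbf{Step 1: outer rounds.} Every iteration of the inner \textbf{while} loop either breaks (when $c^\star=\bot$) or commits one pair $(a,\tau)$ and removes $a$ from $I$ and $\tau$ from $P$. The loop guard requires both sets to be nonempty, and $|I|\le|A|$. So the number of committing rounds is at most $\min(|I|,|P|)\le\min(|A|,|P|)$, and at most one further round ends in a break. This gives $O(\min(|A|,|P|)+1)$ rounds. If $P=\emptyset$ the loop does not execute, and the remaining per-timestep cost is only the execution step. I would therefore read the bound as holding whenever at least one round occurs, with $O(|A|)$ for moving agents and the initial bookkeeping absorbed.

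\textbf{Step 2: cost per round.} A round scans each $a\in I$, at most $|A|$ agents. For each agent there are three operations:
\begin{itemize}
\item the $\arg\min$ over $P$ takes $O(|P|)$ using precomputed distances;
\item \textsc{CandidateReservations} temporarily ignores only $a$'s own suffix, which I would implement as a filter inside SIPP's safe-interval lookup so it adds no more than SIPP's cost;
\item \textsc{PlanFullPath} makes at most three SIPP calls with horizon $T_{\max}$, which by definition costs $C_{\text{full}}(T_{\max})$.
\end{itemize}
Comparing with $d^\star$ is $O(1)$. After the scan, the commit step removes $a$'s old suffix and inserts $\pi$ into $\mathcal{R}$. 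The cost of that is linear in the path length, and the path length is bounded by $3T_{\max}$, which is itself a lower bound on the cost of producing $\pi$ with SIPP. So the commit is dominated by $C_{\text{full}}(T_{\max})$. Each round therefore costs $O(|A|(|P|+C_{\text{full}}(T_{\max})))$.

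\textbf{Step 3: combine.} Multiplying the Step 1 round count by the Step 2 round cost gives the stated bound. Executing one timestep along the reserved paths costs $O(|A|)$ and is absorbed.

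The main obstacle I expect is justifying that the reservation-table manipulations (the temporary suffix removal and the commit/overwrite) are dominated by $C_{\text{full}}(T_{\max})$ and not by $|\mathcal{R}|$. I would handle this by assuming a per-agent indexed reservation table, where removing or inserting an agent's suffix costs time proportional to that suffix's length, which is $O(T_{\max})$. I would note this data-structure assumption explicitly as part of the model of $C_{\text{SIPP}}$.
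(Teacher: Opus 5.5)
Your proposal is correct and matches the paper's own justification: at most $\min(|A|,|P|)$ committing rounds, each scanning up to $|A|$ candidate agents at cost $O(|P|+C_{\text{full}}(T_{\max}))$ per agent. The paper's argument is a single sentence, and the details you add are not in it but only tighten it: the one extra round that ends in a break, the $P=\emptyset$ edge case, and the indexed-reservation-table assumption that makes the suffix removal and commit costs negligible next to SIPP.
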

Thus SHARP trades more centralized planning work for the stronger reservation invariant; Section~\ref{sec:experiments} reports the measured cost.

\section{Completeness Under Fixed Havens}
\label{sec:proof}

\begin{definition}[Finite-release completeness]
\label{def:finite_release_complete}
A policy is \emph{finite-release complete} under a stated set of assumptions if, whenever only finitely many tasks are released, every released task is delivered in finite time.
\end{definition}

\begin{definition}[Quiescent Configuration]
\label{def:quiescent}
The system is in a \emph{quiescent configuration} at time $t$ if every agent is at its own Haven and no future reservation leaves that Haven.
\end{definition}

\begin{definition}[Feasible agent-task pair]
\label{def:feasible_pair}
At time $t$, a pending task $\tau$ is \emph{feasible} for agent $a$ if there exists a finite collision-free reserved route from $\mathrm{pos}_a(t)$ through $s_{\tau}$ and $g_{\tau}$ to $\eta(a)$, respecting current reservations and treating all foreign Havens as blocked for $a$.
In SHARP, \textsc{PlanFullPath} is used as a sufficient validation procedure for such pairs; it is not claimed to be a complete feasibility checker outside quiescent configurations.
The SHARP corollary below states the extra SIPP horizon condition under which this validator finds the quiescent feasible pairs needed below.
\end{definition}

\begin{definition}[Assignment-Progress Property]
\label{def:assignment_progress}
A task-assignment strategy satisfies the \emph{Assignment-Progress Property} if, whenever at least one feasible agent-task pair exists in a quiescent configuration, the strategy assigns at least one such pair.
\end{definition}

\begin{definition}[Route-Execution Progress Condition]
\label{def:route_progress}
A fixed-Haven execution policy satisfies the \emph{Route-Execution Progress Condition} if, once no new task is committed after time $t$, every non-idle agent executes the next reserved state of a finite valid suffix at each timestep until it reaches its Haven; moreover, if the suffix contains committed task waypoints, those pickup/delivery waypoints are reached before the final Haven.
This rules out failed validations, replanning attempts, or route-overwrite opportunities that defer committed routes through delivery and Haven forever.
\end{definition}

Corollary~\ref{cor:assignment_progress} below shows that SHARP's nearest-pickup rule satisfies the Assignment-Progress Property.

\begin{lemma}[Invariant preservation]
\label{lem:invariant}
SHARP preserves the Safe-Haven Reservation Invariant at every timestep.
\end{lemma}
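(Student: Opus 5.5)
We argue by induction on the timestep $t$, treating the five clauses of the Safe-Haven Reservation Invariant (Definition~\ref{def:reservation_invariant}) together. Within one iteration of the outer loop of Algorithm~\ref{alg:sharp} there are two kinds of transition: zero or more \emph{commit} operations inside the inner assignment loop, followed by one \emph{execute} operation that advances every agent by one reserved state. We show that each commit preserves the invariant and that the execute step preserves it. Since a timestep is a finite composition of these operations, the lemma follows.

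\textbf{Base case.} At $t=0$ we have $\mathcal{R}=\emptyset$, no task is committed, and every agent stands at its start cell, which is its Haven $\eta(a)$. So every agent is idle. Clauses 1--3 hold vacuously. Clause 4 holds because $\eta$ is injective (Definition~\ref{def:haven_reachability}, item~3), so no agent's start cell is another agent's Haven. Clause 5 holds because each idle agent is at its own Haven.

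\textbf{Commit step.} Suppose the invariant holds for $\mathcal{R}$ and the pair $(a,\tau,\pi)$ is committed. By construction, $\pi$ is the output of \textsc{PlanFullPath} and is the concatenation of three SIPP segments. The segments are time-consistent: each starts at the previous segment's arrival time and location. Together they run from $\mathrm{pos}_a(t)$ through $s_\tau$ and $g_\tau$ to $\eta(a)$. This gives clause~1 for $a$.

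SIPP was run against $\mathcal{R}_a=\textsc{CandidateReservations}(\mathcal{R},a)$. This table contains every other agent's reservations and omits only $a$'s own retreat suffix, and that suffix is exactly what the algorithm deletes before writing $\pi$. Hence $\pi$ has no vertex or swap conflict with any retained reservation, which gives clause~3.

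The blocked set is $B_a=H\setminus\{\eta(a)\}$, so $\pi$ never enters a foreign Haven, which protects clause~4. Idle agents are not represented in $\mathcal{R}$, so we must check that $\pi$ cannot collide with them. Every idle agent $b\neq a$ sits at $\eta(b)\in B_a$, so $\pi$ avoids it. Reservations of the other agents are untouched, so their clauses 1, 2 and 5 still hold.

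Two points need care here. First, the removal of $a$'s old suffix must delete nothing but $a$'s own future states. Second, \textsc{Concat} must introduce no new states at the segment boundaries. Both follow from the definitions of \textsc{CandidateReservations} and \textsc{Concat}. A failed validation changes nothing, so the invariant trivially persists in that case.

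\textbf{Execute step.} Each busy or retreating agent moves to the next state of its reserved path, and the remaining suffix becomes its new reservation. Clauses 1 and 2 are preserved because a suffix of a path through the remaining waypoints to $\eta(a)$ is still such a path. Whenever $s_\tau$ or $g_\tau$ is reached, the set of unvisited waypoints shrinks accordingly. When $g_\tau$ is reached, the agent becomes retreating and its suffix to $\eta(a)$ remains valid. When the suffix is exhausted at $\eta(a)$, the agent becomes idle, which gives clause~5.

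Actual positions at $t+1$ are the reserved states, and those states were pairwise conflict-free. Idle agents do not move, and reserved paths never enter foreign Havens. So the configuration at $t+1$ is collision-free and clause~4 holds.

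\textbf{Main obstacle.} The delicate point is interaction between reserved and unreserved agents. Idle agents are deliberately not written into $\mathcal{R}$, and a reservation ends at its Haven without an infinite wait. We must therefore show two things: that no reserved path can hit an idle agent, and that no later reservation can hit an agent that has finished its path and is parked at its Haven. Both reduce to the same fact. Every parked or idle agent occupies its own Haven, and that Haven lies in the blocked set $B_b$ of every other agent $b$ at every future SIPP call. We will state this as an explicit sub-claim before the case analysis.

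A related subtlety is the agent's own Haven. It is not blocked for its owner, so an agent's path could end at or pass through $\eta(a)$ mid-route. Only $a$ may ever reserve $\eta(a)$, however, so this cannot create a conflict with any other agent.
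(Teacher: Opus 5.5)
Your proposal is correct and follows the same inductive argument as the paper. Both start from all agents idle at their Havens, show that each commit preserves collision-freedom and Haven exclusivity because SIPP validates the full path with foreign Havens blocked, advance agents only along validated reservations, and allow a retreat suffix to be overwritten only by immediate replacement with a new validated path to the same Haven. Your version is more explicit, notably in checking that idle or parked agents, which are absent from $\mathcal{R}$, are protected by owner-only Haven blocking; the paper leaves this implicit. In your base case, injectivity of $\eta$ already follows from the agents' distinct start cells, so you need not invoke Haven-Reachability, which the lemma does not assume.
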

\begin{proof}
Initially, all agents are idle at their Havens.
A path is committed only if SIPP validates a collision-free concatenation ending at the owner's Haven while treating foreign Havens as blocked, so committed paths preserve Haven exclusivity and collision freedom.
During execution, agents move only along validated reservations, and once delivery finishes the remaining suffix to Haven becomes the retreat reservation.
When a retreating agent accepts a new task, its old suffix is removed only through immediate replacement by another validated full path to the same Haven.
Therefore the invariant is preserved inductively.
\end{proof}

\begin{lemma}[Feasibility in quiescence]
\label{lem:quiescence_feasible}
Assume Haven-Reachability with witness support region $W$.
In quiescence, each pending task is feasible for each agent.
\end{lemma}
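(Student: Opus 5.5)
We construct the required route explicitly from Haven-Reachability and then check that, in quiescence, nothing blocks it. Fix a witness support region $W$, a quiescent time $t$, an agent $a$, and a pending task $\tau$. By Definition~\ref{def:quiescent}, $\mathrm{pos}_a(t)=\eta(a)$, and no future reservation of any agent leaves its own Haven. So the only occupied vertices, now and at all future times, are the Havens in $H$, each held by its owner. Since $W\subseteq V\setminus H$, every vertex of $W$ is free at every future timestep. The only constraints on $a$'s route are therefore that it avoids the foreign Havens $H\setminus\{\eta(a)\}$ and is collision-free with agents that stay put at their own Havens. Both hold automatically for any route whose vertices lie in $W\cup\{\eta(a)\}$.

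We build the route in three legs, each inside $W\cup\{\eta(a)\}$.

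\emph{Leg 1, from $\eta(a)$ to $s_\tau$.} Condition~1 of Definition~\ref{def:haven_reachability} gives $s_\tau\in V_{\mathrm{task}}\subseteq W$. Condition~2 gives a simple path from $s_\tau$ to $\eta(a)$ whose internal vertices lie in $W$. We reverse it to get a path from $\eta(a)$ to $s_\tau$ using only $\eta(a)$ and vertices of $W$.

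\emph{Leg 2, from $s_\tau$ to $g_\tau$.} Both endpoints lie in $W$, and $G[W]$ is connected, so a path between them exists inside $G[W]$.

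\emph{Leg 3, from $g_\tau$ to $\eta(a)$.} Condition~2 applied at $g_\tau$ gives this leg directly.

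Concatenating the three legs with one time step per edge gives a finite timed route. It touches no foreign Haven, because internal vertices lie in $W\subseteq V\setminus H$, and its only Haven vertex is $\eta(a)$, which condition~3 (injectivity of $\eta$) guarantees has $a$ as its unique owner. It has no vertex conflicts, because every other agent stays at its own Haven, which is disjoint from the route. It has no edge-swap conflicts, because the other agents never move. So the route respects all current reservations, and $\tau$ is feasible for $a$ in the sense of Definition~\ref{def:feasible_pair}.

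One degenerate case needs care. If $s_\tau=g_\tau$, or if a leg's two endpoints coincide, that leg is empty and the concatenation still works. The only point that needs care is the boundary of condition~2. It constrains only internal vertices to lie in $W$, so we must confirm that no foreign Haven can appear on a leg. The endpoints of each leg are either in $W$ or equal to $\eta(a)$, and the internal vertices are in $W$, so this is confirmed. Note also that the lemma claims existence only. Showing that \textsc{PlanFullPath} actually finds such a route within $T_{\max}$ is the later corollary's job. For that, the leg lengths bounded by $\mathrm{diam}(G[W])+1$ would be used, but this is not needed here.
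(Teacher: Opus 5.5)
Your proof is correct and follows essentially the same argument as the paper: in quiescence every vertex of $W$ is free at all future times, and you concatenate a reversed Haven-access path, a path inside $G[W]$, and another Haven-access path, then check that the route meets no foreign Haven and no conflicting reservation. Your version is just more explicit than the paper's about the endpoint-versus-internal-vertex wording of condition~2, about edge-swap conflicts, and about the role of injectivity of $\eta$.
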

\begin{proof}
In quiescence, all vertices in $W$ are free in the future because all agents wait at their own Havens.
Haven access from the support and undirectedness give a path between the agent's Haven and each task vertex whose non-Haven vertices lie in $W$; task-support connectivity connects pickup to delivery inside $G[W]$.
Concatenating these paths gives a finite route from the agent's current Haven through pickup and delivery back to its Haven.
The route avoids foreign Havens and has no conflicting future reservation.
\end{proof}

\begin{corollary}[SHARP satisfies Assignment-Progress]
\label{cor:assignment_progress}
Assume Haven-Reachability with witness support region $W$, $T_{\max} \ge \mathrm{diam}(G[W]) + 1$, and SIPP completeness up to this per-call relative horizon with respect to the reservation table and static foreign-Haven blocking constraints.
Then fixed-Haven SHARP's nearest-pickup assignment rule satisfies the Assignment-Progress Property.
\end{corollary}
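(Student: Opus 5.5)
We need to show that whenever a quiescent configuration has at least one pending task (so by Lemma~\ref{lem:quiescence_feasible} a feasible pair exists), SHARP's inner loop commits at least one pair. The plan is to look at the first iteration of the inner while-loop at a quiescent time $t$. All agents are idle, so $I=A\neq\emptyset$, and $P\neq\emptyset$ by assumption. Each $a\in A$ proposes its nearest pending task $\tau_a$, with ties broken by the fixed order. It then suffices to show that \textsc{PlanFullPath}$(a,\tau_a,\mathcal{R}_a,T_{\max})$ returns true for \emph{some} $a$. We will in fact show it returns true for \emph{every} $a$. Then $c^\star\neq\bot$, because the first successful candidate gives a finite $d^\star$, and the algorithm commits a pair. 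That pair is feasible in the sense of Definition~\ref{def:feasible_pair}, because the validated concatenation is such a route.

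The core step is that each of the three SIPP legs succeeds within the per-call horizon. In quiescence no future reservation leaves a Haven, and idle occupancy is not stored in $\mathcal{R}$. So for agent $a$ the only constraints are the static blocks $B_a=H\setminus\{\eta(a)\}$. We can also argue, as in Lemma~\ref{lem:quiescence_feasible}, that vertices of $W$ are never reserved during the legs. The reason is that all legs are validated against the same empty-outside-Havens table, and within one \textsc{PlanFullPath} call the agent's own earlier legs are not inserted into $\mathcal{R}$, so they do not block its later legs. We then bound each leg's length:
\begin{itemize}
\item \emph{Leg 1} ($\eta(a)\to s_{\tau}$). Haven access (item~2 of Definition~\ref{def:haven_reachability}) gives a path from $s_\tau$ to $\eta(a)$ with internal vertices in $W$. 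Reversing it, the agent takes one edge from $\eta(a)$ into $W$ and then travels inside $G[W]$. Taking a shortest such route, its length is at most $1+\mathrm{diam}(G[W])$. This needs care: the Haven-access path need not be shortest. Instead we take the neighbor $w\in W$ of $\eta(a)$ that ends the given path, and then a shortest $w$--$s_\tau$ path in the connected graph $G[W]$ (item~1). This route avoids all Havens except $\eta(a)$ at its start.
\item \emph{Leg 2} ($s_\tau\to g_\tau$). Both endpoints lie in $W$ and $G[W]$ is connected, so the leg has length at most $\mathrm{diam}(G[W])$.
\item \emph{Leg 3} ($g_\tau\to\eta(a)$). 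This is symmetric to leg~1, with length at most $\mathrm{diam}(G[W])+1$.
\end{itemize}
Since $W\subseteq V\setminus H$, each route avoids $B_a$. Each route is collision-free against an $\mathcal{R}$ that has no reservations in $W$ at any time. So each leg is a feasible SIPP solution of length at most $T_{\max}$, and SIPP completeness up to the per-call horizon guarantees that the search returns some solution. The returned solution may differ from ours, but the next leg only depends on arriving at the fixed vertex $s_\tau$ or $g_\tau$ at some time. Since constraints outside Havens are time-invariant, the existence argument for the next leg is independent of the arrival time.

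The main obstacle is the bookkeeping in that last point. We must confirm that the sequential, non-backtracking validator cannot be trapped by an earlier leg's choice. Here this reduces to the observation that in quiescence the constraint set is static, namely the blocked set $B_a$ only, at every start time $t_1,t_2$. A second subtlety is the ``same-time'' interpretation of quiescence inside the assignment loop. The claim only concerns the first iteration, before any commitment alters $\mathcal{R}$, so later iterations, whose constraints change, are irrelevant to the property.
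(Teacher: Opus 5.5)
Your proposal is correct and follows the paper's proof. Quiescence gives $I=A$, every pending task is feasible, each of the three legs for the selected task fits within the per-call horizon $T_{\max}$ with only foreign Havens blocked, so SIPP completeness makes \textsc{PlanFullPath} succeed and $c^\star\neq\bot$. Your extra care fills in details the paper only asserts: you bound each Haven-touching leg by one edge to the last $W$-neighbor of $\eta(a)$ plus a shortest path in $G[W]$, and you note that the non-backtracking validator checks all legs against the same time-invariant constraints.
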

\begin{proof}
Consider any quiescent configuration with at least one pending task.
Then every agent is idle at its Haven, so in Algorithm~\ref{alg:sharp} we have $I=A$.
By Lemma~\ref{lem:quiescence_feasible}, every pending task is feasible for every agent.
For the task selected by each $a \in I$, the pickup$\rightarrow$delivery segment has length at most $\mathrm{diam}(G[W])$, and each Haven-touching segment has length at most $\mathrm{diam}(G[W]) + 1$.
Since each SIPP call uses horizon $T_{\max}$ from its own segment start time and blocks only foreign Havens, \textsc{PlanFullPath} returns a feasible full path for that selected task.
Thus some feasible candidate is generated and $c^\star \neq \bot$.
The commitment step therefore commits at least one feasible agent-task pair.
Thus the Assignment-Progress Property holds.
\end{proof}

\begin{lemma}[Convergence to quiescence under SHARP]
\label{lem:convergence}
If no further tasks are assigned after some timestep $t$, then SHARP reaches a quiescent configuration in finite time.
\end{lemma}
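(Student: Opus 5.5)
The plan is to argue directly from the Safe-Haven Reservation Invariant (Lemma~\ref{lem:invariant}). Once no task is committed after $t$, every non-idle agent's reservation is fixed forever, because the only operation that changes a reservation is a commitment. In particular, a retreating agent's suffix can be overwritten only when it accepts a new task, and by hypothesis that never happens after $t$. So at time $t$ each agent $a$ is in one of three situations. If it is idle, it sits at $\eta(a)$ with no outside reservation. If it is busy, it holds a finite reserved path through its remaining waypoints ending at $\eta(a)$. If it is retreating, it holds a finite reserved suffix ending at $\eta(a)$. I will let $L_a$ be the length (the final timestamp minus $t$) of agent $a$'s reserved path at time $t$, with $L_a=0$ for idle agents, and set $L=\max_a L_a$. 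This is finite because $A$ is finite and each committed path is a finite concatenation of SIPP outputs, each of length at most $T_{\max}$.

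Next I will argue that the agents follow these reservations. In Algorithm~\ref{alg:sharp}, the step ``Execute one timestep along reserved paths'' moves every agent to its next reserved state. Failed validations after $t$ do not alter $\mathcal{R}$, since \textsc{CandidateReservations} only ignores the suffix temporarily and the removal happens only upon commitment. So each agent advances one state per timestep along its fixed reservation. This is exactly the Route-Execution Progress Condition (Definition~\ref{def:route_progress}), discharged for SHARP. Collision-freedom and Haven exclusivity along the way come from items 3 and 4 of the invariant. Hence, by time $t+L$, every agent has reached the end of its reservation, which is its own Haven. A busy agent passes its pickup and delivery waypoints earlier, because they precede $\eta(a)$ in the concatenation $\textsc{Concat}(\pi_1,\pi_2,\pi_3)$. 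When a busy agent reaches delivery it becomes retreating, and when a retreating agent arrives at its Haven it becomes idle with no further reservation.

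Finally, I will check Definition~\ref{def:quiescent} at time $t+L$. Every agent is at its own Haven. No future reservation leaves that Haven, because all reserved paths have expired and no new ones were committed. Idle agents remain at their Havens by item 5 of the invariant.

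The main obstacle is making rigorous the claim that reservations are frozen and executed faithfully. Concretely, I must rule out that the per-timestep assignment loop can modify $\mathcal{R}$ without a commitment, and confirm that an agent arriving at $\eta(a)$ before $t+L$ does not need to move again. For the first point I would point to the line structure of Algorithm~\ref{alg:sharp}, where $\mathcal{R}$ is written only in the two lines following $c^\star\neq\bot$. For the second, the reserved path ends at $\eta(a)$, the agent becomes idle there, and the owner-only blocking rule guarantees that no other agent's reservation uses $\eta(a)$, so the owner is never forced to leave.
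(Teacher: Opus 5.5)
Your proof is correct and follows essentially the same route as the paper. The invariant gives each non-idle agent a finite reserved suffix ending at its Haven, no commitment after $t$ means no suffix is overwritten, and the execution step walks every agent along its suffix to its Haven (your explicit bound $t+L$ and your check that failed validations leave $\mathcal{R}$ unchanged are useful elaborations of steps the paper states only briefly).
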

\begin{proof}
By Lemma~\ref{lem:invariant}, every busy or retreating agent already owns a finite collision-free reserved path suffix ending at its Haven.
In Algorithm~\ref{alg:sharp}, busy agents are never reconsidered for task assignment, and a retreat suffix is overwritten only when a retreating agent accepts a newly validated full path to a new task and back to the same Haven.
Therefore, if no further tasks are assigned after time $t$, no reserved suffix is overwritten after time $t$.
The execution step of Algorithm~\ref{alg:sharp} executes the next state of each non-idle agent's finite reserved suffix at each timestep, so each reaches its Haven in finite time, after which no future reservation leaves it.
\end{proof}

The role of the framework theorem is to separate the generic progress argument from the graph-structural content.
The nontrivial step is supplied by Lemma~\ref{lem:quiescence_feasible}: after quiescence, Haven-Reachability makes every pending task feasible for every Haven owner because the support region is clear and connected.

\begin{theorem}[Finite-release completeness of the fixed-Haven framework]
\label{thm:framework_completeness}
Assume Haven-Reachability with witness support region $W$, finitely many task releases, and the no-transfer commitment policy defined in Section~\ref{sec:framework}.
Any fixed-Haven policy that maintains the Safe-Haven Reservation Invariant, satisfies the Assignment-Progress Property, and satisfies the Route-Execution Progress Condition is finite-release complete.
\end{theorem}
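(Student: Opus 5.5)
The argument is by contradiction, using a counting argument over commitments.

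Let $N$ be the finite number of released tasks and $r^\ast$ the last release time. Each commitment moves a pending task to in-progress, and the no-transfer policy means a committed task is never returned to the pending pool or reassigned. So at most $N$ commitments ever occur. Hence there is a finite time $t_c \ge r^\ast$ after which no new task is committed.

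Next I argue that every task committed by time $t_c$ is delivered. Because the policy maintains the Safe-Haven Reservation Invariant, each busy agent at $t_c$ holds a finite collision-free reservation through its unvisited pickup and delivery waypoints to its Haven. Each retreating agent holds a finite suffix to its Haven. By the Route-Execution Progress Condition, with no commitments after $t_c$, every non-idle agent executes its suffix step by step and reaches the committed waypoints before its Haven. So every in-progress task is delivered in finite time. Once every non-idle agent has reached its Haven, no future reservation leaves a Haven, and invariant item 5 keeps idle agents there. The system is therefore in a quiescent configuration at some finite time $t_q \ge t_c$; this mirrors Lemma~\ref{lem:convergence}, but uses only the abstract progress condition.

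Now suppose some released task $\tau$ is never delivered. Since every committed task is delivered, $\tau$ is never committed, so it is still pending at $t_q$. By Lemma~\ref{lem:quiescence_feasible}, under Haven-Reachability with witness $W$, $\tau$ is feasible for every agent at $t_q$. The Assignment-Progress Property then forces the policy to commit some feasible pair at $t_q$. This contradicts the absence of commitments after $t_c \le t_q$. Hence no such $\tau$ exists, and every released task is delivered in finite time.

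The main obstacle I expect is the second step: making sure quiescence is actually reached, and that a quiescent state is one where Assignment-Progress applies. This needs two checks. First, the Route-Execution condition must exclude indefinite deferral; it is stated as exactly this assumption. Second, Haven exclusivity (invariant item 4) must ensure agents returning to distinct Havens never contend for the same cell. Injectivity of $\eta$ from Haven-Reachability supplies this.

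A minor subtlety is timing: the quiescent configuration and the assignment step are evaluated at the same timestep, before execution. Choosing $t_q$ as the first quiescent time after $t_c$ and applying Assignment-Progress there handles this.
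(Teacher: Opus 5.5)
Your proof is correct and follows essentially the same route as the paper's: you bound the number of commitments using finite releases and no-transfer, fix a time after the last release and last commitment, use Route-Execution Progress to deliver all committed tasks and reach quiescence, and then combine Lemma~\ref{lem:quiescence_feasible} with Assignment-Progress to force a new commitment, which is a contradiction. Your extra remark on Haven exclusivity and injectivity of $\eta$ is harmless but not needed, since the invariant already guarantees that the executed suffixes are collision-free.
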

\begin{proof}
Suppose some released task remains unfinished forever.
A committed task receives, by the invariant, a finite reserved route whose delivery waypoint precedes the final Haven.
Because task releases are finite, each commitment consumes one previously pending task, and the no-transfer policy prevents transfer or recommitment of committed tasks, the total number of commitments is finite.
Let $t^\star$ be after the last release and the last commitment.
By Route-Execution Progress, every task committed by $t^\star$ reaches its delivery waypoint in finite time, and all non-idle agents then reach their Havens in finite time.
Thus any task that remains unfinished forever must be pending in a quiescent configuration.
Lemma~\ref{lem:quiescence_feasible} then makes every pending task feasible for every agent, so Assignment-Progress forces another commitment, contradicting the choice of $t^\star$.
\end{proof}

\begin{corollary}[Finite-release completeness of fixed-Haven SHARP]
\label{cor:sharp_completeness}
Assume Haven-Reachability with witness support region $W$, finitely many task releases, the no-transfer commitment policy defined in Section~\ref{sec:framework}, a per-call relative horizon $T_{\max} \ge \mathrm{diam}(G[W]) + 1$, and SIPP completeness up to that horizon with respect to reservations and static foreign-Haven blocking.
Then fixed-Haven SHARP completes every released task in finite time.
\end{corollary}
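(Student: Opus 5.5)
The plan is to instantiate Theorem~\ref{thm:framework_completeness} with SHARP as the fixed-Haven policy. Its hypotheses on the instance (Haven-Reachability with witness $W$, finitely many releases, no-transfer commitment) are assumed directly in the corollary. So it remains to discharge the three policy-level conditions for fixed-Haven SHARP:
\begin{enumerate}
\item the Safe-Haven Reservation Invariant;
\item the Assignment-Progress Property;
\item the Route-Execution Progress Condition.
\end{enumerate}

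\textbf{Invariant and Assignment-Progress.} The invariant is exactly Lemma~\ref{lem:invariant}, which holds unconditionally for SHARP. Assignment-Progress is Corollary~\ref{cor:assignment_progress}. That corollary needs precisely the additional hypotheses stated here: $T_{\max}\ge \mathrm{diam}(G[W])+1$ per call, and SIPP completeness up to that horizon with respect to reservations and static foreign-Haven blocking. So these two conditions follow by citation. I would note explicitly that the no-transfer policy is built into Algorithm~\ref{alg:sharp}: busy agents are removed from $I$, and committed tasks are removed from $P$ and never reinserted.

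\textbf{Route-Execution Progress.} This is the step needing care, since the condition is phrased for an arbitrary time after which no new task is committed. I would argue from the structure of Algorithm~\ref{alg:sharp}, essentially reusing the proof of Lemma~\ref{lem:convergence}. Suppose no commitment occurs after time $t$. Then:
\begin{itemize}
\item Line-level inspection shows that a reservation in $\mathcal{R}$ is modified only at a commitment, where a retreat suffix is overwritten.
\item Failed validations leave $\mathcal{R}$ untouched, because \textsc{CandidateReservations} only ignores the suffix temporarily.
\end{itemize}
Hence after $t$ every non-idle agent's reserved suffix is fixed and finite. The execution line advances each agent by one reserved state per timestep, so each agent traverses its suffix in finitely many steps.

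For the waypoint-ordering clause, I would use how the path is built. \textsc{PlanFullPath} returns $\mathrm{Concat}(\pi_1,\pi_2,\pi_3)$, so $s_\tau$ and $g_\tau$ occur, in that order, before the final arrival at $\eta(a)$. Task completion is immediate on visiting $g_\tau$, so delivery precedes reaching the Haven. The busy-to-retreating transition only relabels the remaining suffix $\pi_3$.

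\textbf{Main obstacle.} The subtle point is that the framework theorem's contradiction applies Assignment-Progress in a quiescent configuration reached after $t^\star$, while Corollary~\ref{cor:assignment_progress} relies on SIPP actually finding the quiescent routes. I would double-check this against the SIPP horizon claim. In quiescence, $\mathcal{R}$ has no future reservations outside Havens. The pickup$\to$delivery leg lies in $G[W]$, and each Haven-touching leg uses one extra edge by Haven access from the support. So each of the three calls needs horizon at most $\mathrm{diam}(G[W])+1$, and the selected nearest task always validates.

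Combining these, all hypotheses of Theorem~\ref{thm:framework_completeness} hold, and fixed-Haven SHARP completes every released task in finite time.
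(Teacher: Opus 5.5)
Your proposal is correct and follows the paper's proof. The paper likewise obtains the invariant from Lemma~\ref{lem:invariant}, Assignment-Progress from Corollary~\ref{cor:assignment_progress}, and Route-Execution Progress from Lemma~\ref{lem:convergence}, and then invokes Theorem~\ref{thm:framework_completeness}. Your added check, that \textsc{Concat}$(\pi_1,\pi_2,\pi_3)$ places $s_\tau$ and $g_\tau$ before the final arrival at $\eta(a)$, makes explicit the waypoint-ordering clause that Lemma~\ref{lem:convergence} leaves implicit.
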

\begin{proof}
Lemma~\ref{lem:invariant} gives the Safe-Haven Reservation Invariant, Corollary~\ref{cor:assignment_progress} gives Assignment-Progress, and Lemma~\ref{lem:convergence} gives the Route-Execution Progress Condition for SHARP.
The claim follows from Theorem~\ref{thm:framework_completeness}.
\end{proof}

\section{Warehouse-Inspired Experiments}
\label{sec:experiments}

We report a main robustness sweep plus two targeted diagnostics: a TP-style fixed-home-return counterfactual and a no-overwrite variant of SHARP.

\subsection{Setup}

\begin{figure}[!t]
\centering
\begin{subfigure}[b]{0.48\columnwidth}
\centering
\includegraphics[width=\textwidth]{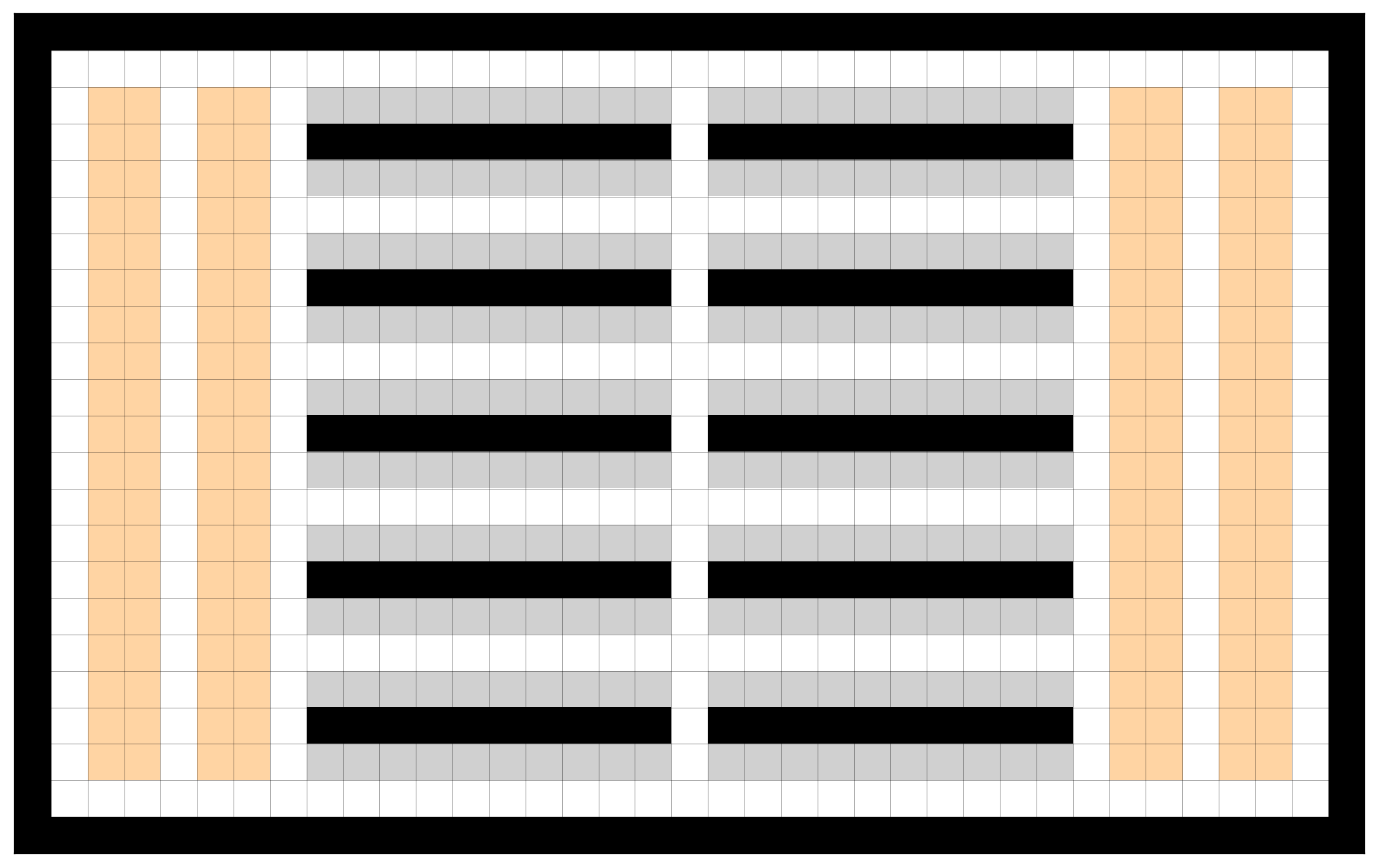}
\caption{well-formed}
\label{fig:map_wf}
\end{subfigure}
\hfill
\begin{subfigure}[b]{0.48\columnwidth}
\centering
\includegraphics[width=\textwidth]{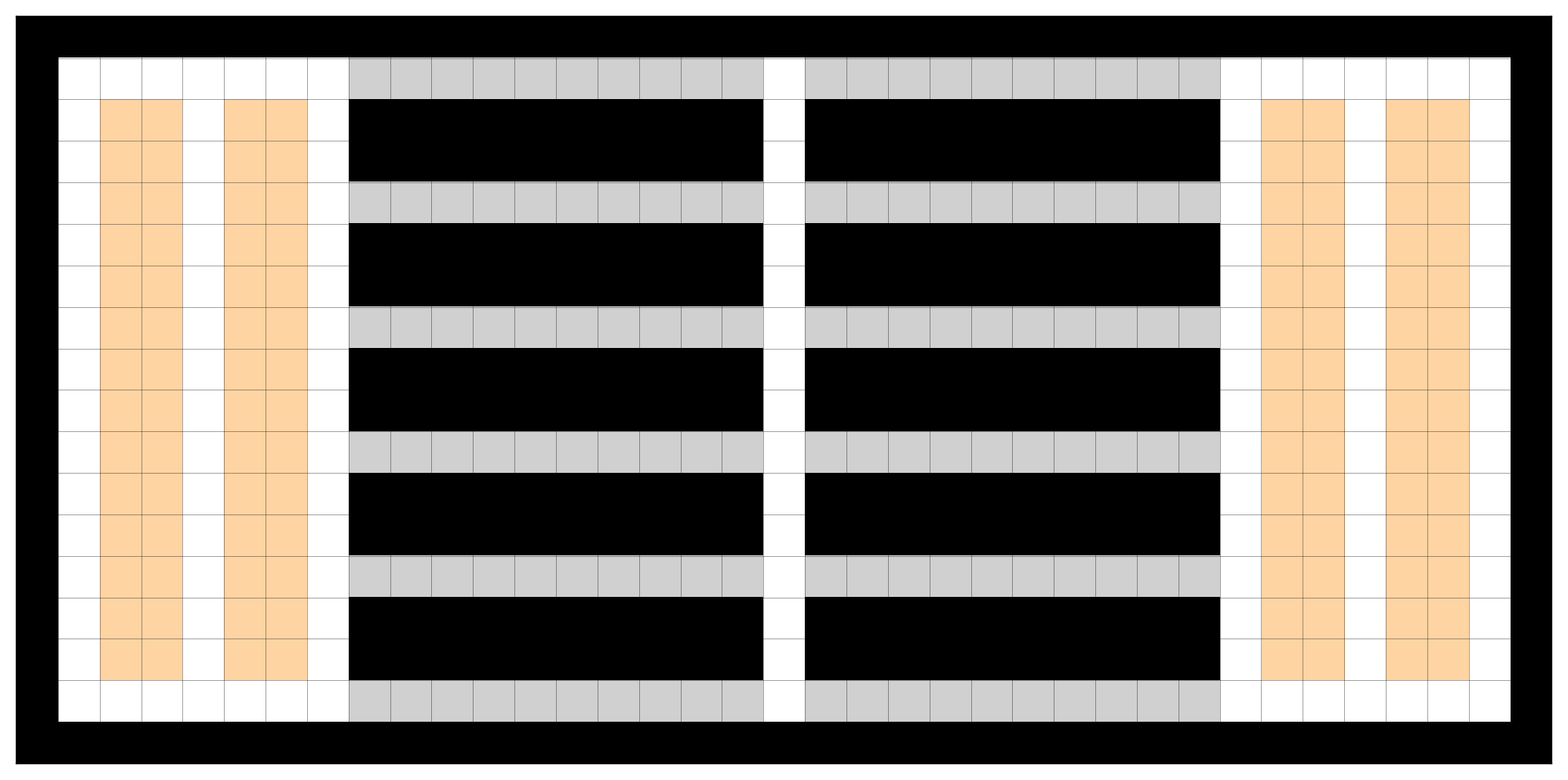}
\caption{narrow-bi}
\label{fig:map_nb}
\end{subfigure}

\vspace{0.5em}

\begin{subfigure}[b]{0.48\columnwidth}
\centering
\includegraphics[width=\textwidth]{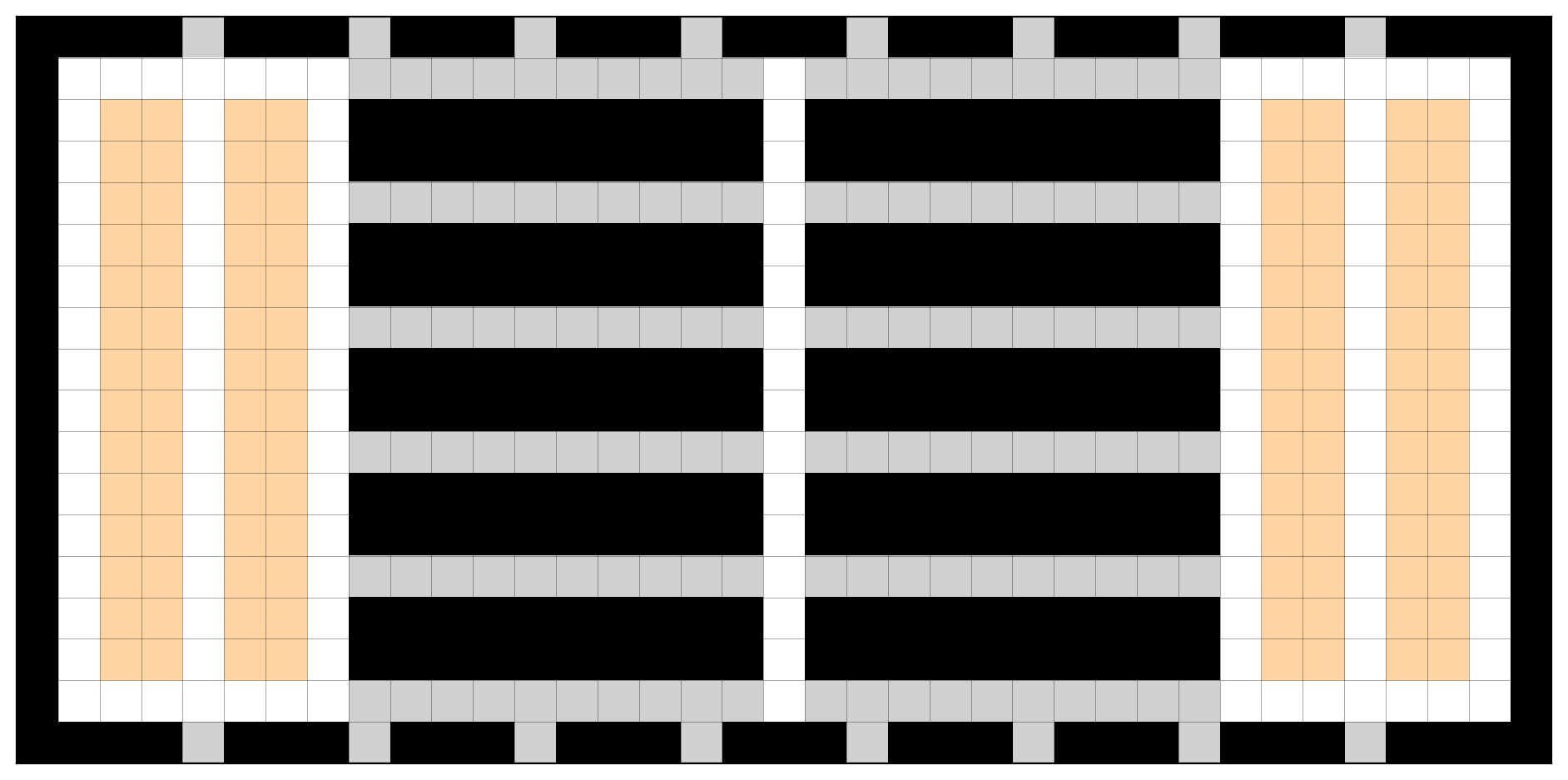}
\caption{narrow-bi-dead}
\label{fig:map_nbd}
\end{subfigure}
\hfill
\begin{subfigure}[b]{0.48\columnwidth}
\centering
\includegraphics[width=\textwidth]{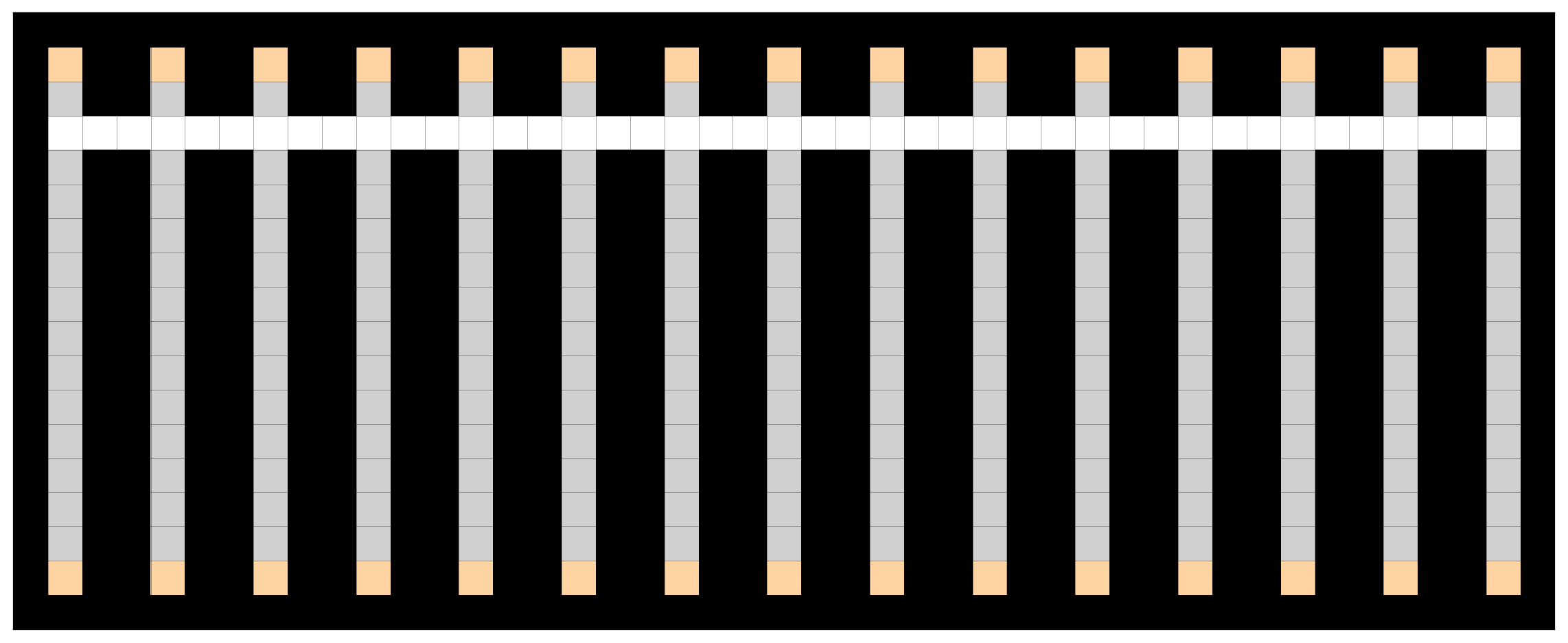}
\caption{tree}
\label{fig:map_tree}
\end{subfigure}
\caption{Map classes. Narrow-bi abbreviates narrow-biconnected, and narrow-bi-dead abbreviates narrow-biconnected with dead ends. Orange cells denote start/Haven candidate locations; in a given run only the sampled start cells belong to $H$. The same sampled start cells are fixed homes for TP-home-return variants. Gray cells are task endpoints, white cells are traversable non-task cells, and black cells are obstacles.}
\label{fig:maps}
\end{figure}

We use four map classes (Figure~\ref{fig:maps}):
\begin{itemize}
\item \textbf{well-formed} ($38 \times 23$): the warehouse-style benchmark of \citet{ma2017lifelong}, satisfying the endpoint-avoidance condition used by TP;
\item \textbf{narrow-bi} ($38 \times 18$): a biconnected but non-well-formed corridor-shrunk variant;
\item \textbf{narrow-bi-dead} ($38 \times 18$): the same with depth-1 workstation/staging dead ends, matching the PIBTTP-TA main-area-plus-attached-tree topology; and
\item \textbf{tree} ($45 \times 18$): a tree-like storage abstraction motivated by warehouse storage/retrieval and MAPD studies~\citep{azadeh2017robotized, roy2017multitier, Iida2023, Hirayama2025}.
\end{itemize}

For each configuration-seed pair, we sample starts/Havens uniformly from orange cells, tasks from the gray task endpoint candidates with pickup $\neq$ delivery, and reuse the identical instance across compared methods.
Unused orange cells remain traversable.
In the experiments we use the conservative witness $W=V\setminus H$; for every sampled Haven set we verified $V_{\mathrm{task}}\subseteq W$, $G[W]$ connectivity, Haven access from $W$, and Haven uniqueness.

The main sweep uses agent counts $\{5,10,15,20,25,30\}$, rates $\lambda \in \{0.5,1.0,\ldots,3.0\}$, 100 runs per configuration, and algorithms TP, PIBT, PIBTTP-TA, and SHARP.
The home-return sweep uses 30 agents, $\lambda \in \{1.0,3.0\}$, well-formed/tree maps, 30 paired seeds, and TP, TP-home-return, TP-SIPP-home-return, and SHARP.
For TP-home-return and TP-SIPP-home-return, each agent's sampled start cell is its fixed home.
The no-overwrite variant uses the hardest tree condition (30 agents, $\lambda=3.0$), 30 paired seeds, and SHARP with or without mid-retreat overwrite.
It keeps full-route validation and a fleet-wide scan over its eligible set, but restricts that set to idle agents: a retreating agent must reach its Haven before accepting another task.
The non-well-formed cases stress layouts beyond endpoint-avoidance or biconnectivity assumptions.
All sweeps use finite releases at timesteps 0 through 300 inclusive with expected rate $\lambda$ (expected total $301\lambda$): each timestep releases $\lfloor \lambda \rfloor$ guaranteed tasks plus one extra task with probability $\lambda - \lfloor \lambda \rfloor$.

A run is counted as successful only if all released tasks are delivered before any stop condition is triggered.
Runs stop at 10,000 timesteps, 1,000 timesteps without assignment/pickup/delivery, or an external 1800~s wall-clock cap; no reported run hit the wall-clock cap.
SHARP uses the theorem's support-region-relative SIPP horizon; with $W=V\setminus H$, the maximum sampled value of $\mathrm{diam}(G[W])+1$ is 67 steps.
TP and TP-home-return use TP-style Cooperative A*; TP-SIPP-home-return keeps TP-style token and fixed-home-return policies but uses SIPP; PIBT-family methods do not use SIPP.
PIBT-family baselines receive the same released-task stream through a no-transfer MAPD wrapper: an available agent targets the nearest reachable pending pickup under static distance, commits only when it reaches pickup, and then targets delivery.
A provisional pickup target remains pending; if multiple agents target it, the first arrival commits the task and the others reselect.
PIBT or PIBTTP-TA supplies the one-step collision-avoidance and priority-inheritance motion policy for those current goals.
The 1,000-step stall rule is an experimental watchdog only and is not part of the completeness theorem.
TP-SIPPwRT would mainly test a low-level planner substitution under TP's endpoint-resting contract; our TP-SIPP-home-return variant instead keeps fixed-home return and serves as a diagnostic, not an exhaustive TP-family replacement.

We report success, makespan (completion time of the last delivered task), service time (release-to-delivery latency), per-step runtime, and run-level runtime for counterfactuals; for the no-overwrite variant we also report total travel distance, measured as summed edge traversals over all agents in a successful run.
Task metrics are computed over successful runs only.
For each task metric $q$ (makespan or service time), map $m$, plotted condition $x$, and method $b$, let $q_{m,x,b}$ be the successful-run mean.
Main-sweep task plots use
\[
\widetilde{q}_{m,x,b}=\frac{q_{m,x,b}}{q^{\mathrm{ref}}_{m,q}},
\]
where $q^{\mathrm{ref}}_{m,q}$ is the corresponding metric value of the best successful non-SHARP baseline at the map-specific anchor for that plot family ($\lambda=3.0$ for rate sweeps, 5 agents for agent-count sweeps), or SHARP if no non-SHARP baseline succeeds there.
These task metrics must be read after conditioning on success rate.
Non-well-formed absolute task metrics are deployment-sensitive, so we report ratios, success, per-step runtime, and planner effort there; absolute task values appear only for the Ma et al. well-formed benchmark.
Paired counterfactuals use Friedman tests followed by Bonferroni-corrected Wilcoxon tests, and the no-overwrite variant uses one-sided paired Wilcoxon tests.
All methods were implemented in Python and evaluated on identical task sequences generated from fixed random seeds on an AMD Threadripper PRO workstation with 256 GB RAM.

\begin{figure*}[!t]
\centering
\includegraphics[width=0.95\textwidth]{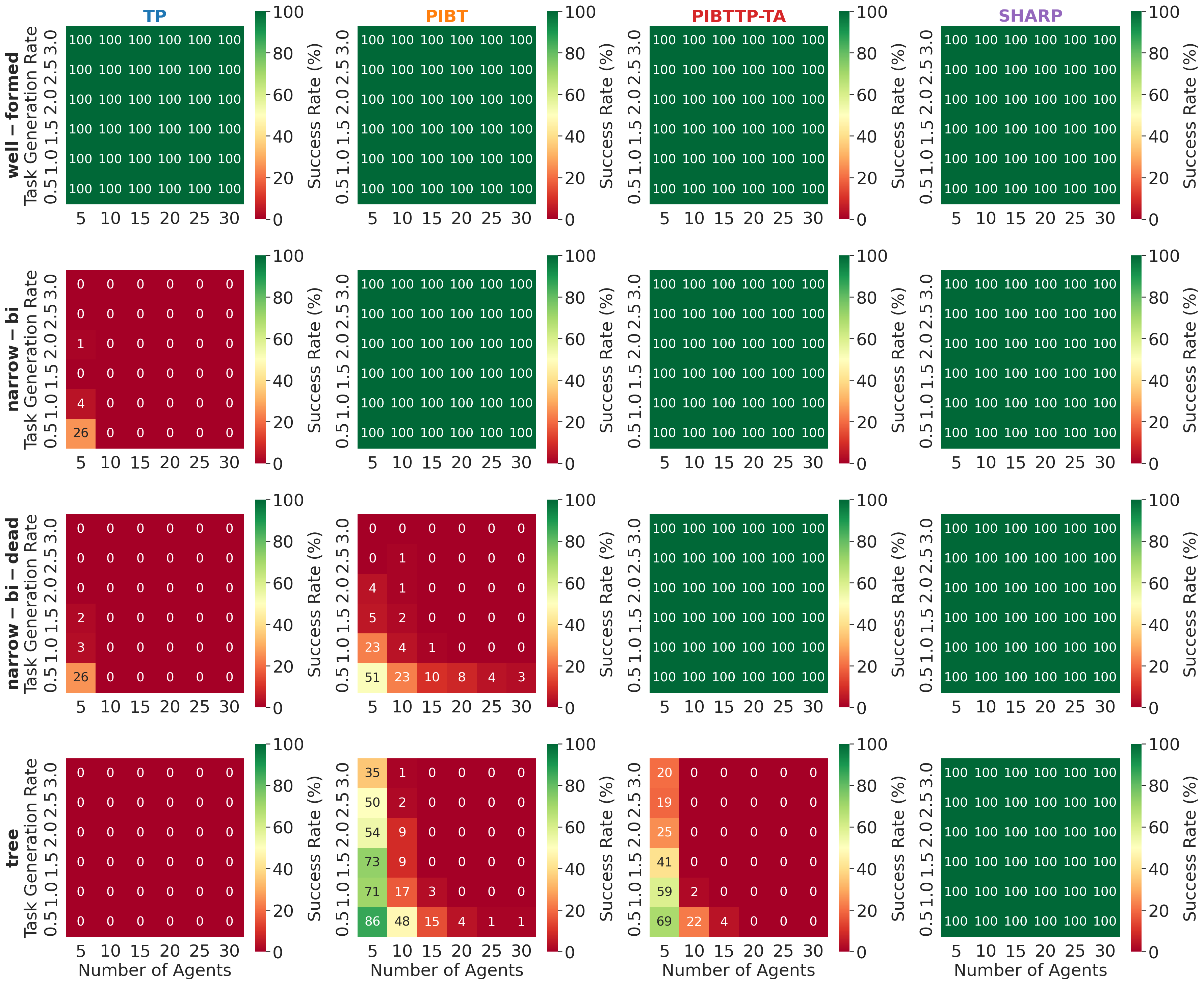}
\caption{Success-rate heatmap for the main sweep. Axes are agent count (x) and task-generation rate $\lambda$ (y). The map panels move from the common well-formed regime to increasingly restrictive narrow, dead-end, and tree layouts. Among the representative TP and PIBT-family MAPD baselines in the main sweep, only SHARP attains 100\% success on all tested configurations.}
\label{fig:success_rate}
\end{figure*}

\paragraph{Comparison scope.}
The main sweep is a stress-test comparison against representative TP and PIBT-family MAPD baselines under our sampled MAPD task distribution, not a theorem-to-theorem dominance claim.
SHARP differs from TP-family baselines by fleet-wide candidate evaluation and retreat-suffix overwrite, whereas TP-family methods keep sequential token ordering~\citep{ma2017lifelong, ma2019lifelong}.
PIBT and PIBTTP-TA are instead distributed/local-communication methods~\citep{okumura2019priority, fujitani2022priority}.
PIBT-family failures outside their intended graph and task-placement regimes should be read as applicability-boundary evidence.
Our PIBTTP-TA implementation follows the temporary-priority and temporary-avoidance motion rules, and the narrow-bi-dead map matches the intended main-area-plus-attached-tree topology, but the sampled task stream is not filtered to exactly satisfy all theorem-side task-placement restrictions of \citet{fujitani2022priority}.
We omit direct runtime comparisons to \citet{liu2019taskpathplanning} and \citet{xu2022multigoal}, because they study offline or multi-goal variants.

\subsection{Main Sweep: Coordination Robustness and Task Performance}

Figure~\ref{fig:success_rate} is the main coordination-robustness result.
On the well-formed map, all four methods remain at 100\% success on all 36 configurations, consistent with TP's well-formed guarantee and the empirical robustness of the PIBT-family methods on this benchmark.
Off that regime the empirical separation becomes clear: TP degrades on narrow-bi maps, PIBT also loses robustness on narrow-bi-dead maps, and on the tree map SHARP is the only method among the representative main-sweep baselines with 100\% success across all tested loads and agent counts.
No run hit the external 1800~s wall-clock cap, only nine unsuccessful runs reached the 10,000-timestep horizon, and 9,911 main-sweep runs triggered the 1,000-step stall watchdog; the heatmap is therefore not a wall-clock-budget artifact.

Figures~\ref{fig:service_time_pair} and~\ref{fig:makespan_pair} complement the success-rate heatmap with normalized task-performance trends.
On well-formed and narrow-bi-dead layouts, SHARP remains competitive with the strongest successful baseline despite reserving a return to Haven.
On tree layouts, these task metrics are secondary because they are conditioned on successful runs only; the main message there is structural robustness at much higher centralized planning cost.
Consequently, tree-layout task ratios should not be read as quality comparisons against failed baselines; they summarize successful-run behavior after the success-rate result has been considered.

\begin{figure*}[!tbp]
\centering
\begin{subfigure}[t]{0.48\textwidth}
\centering
\includegraphics[width=\linewidth]{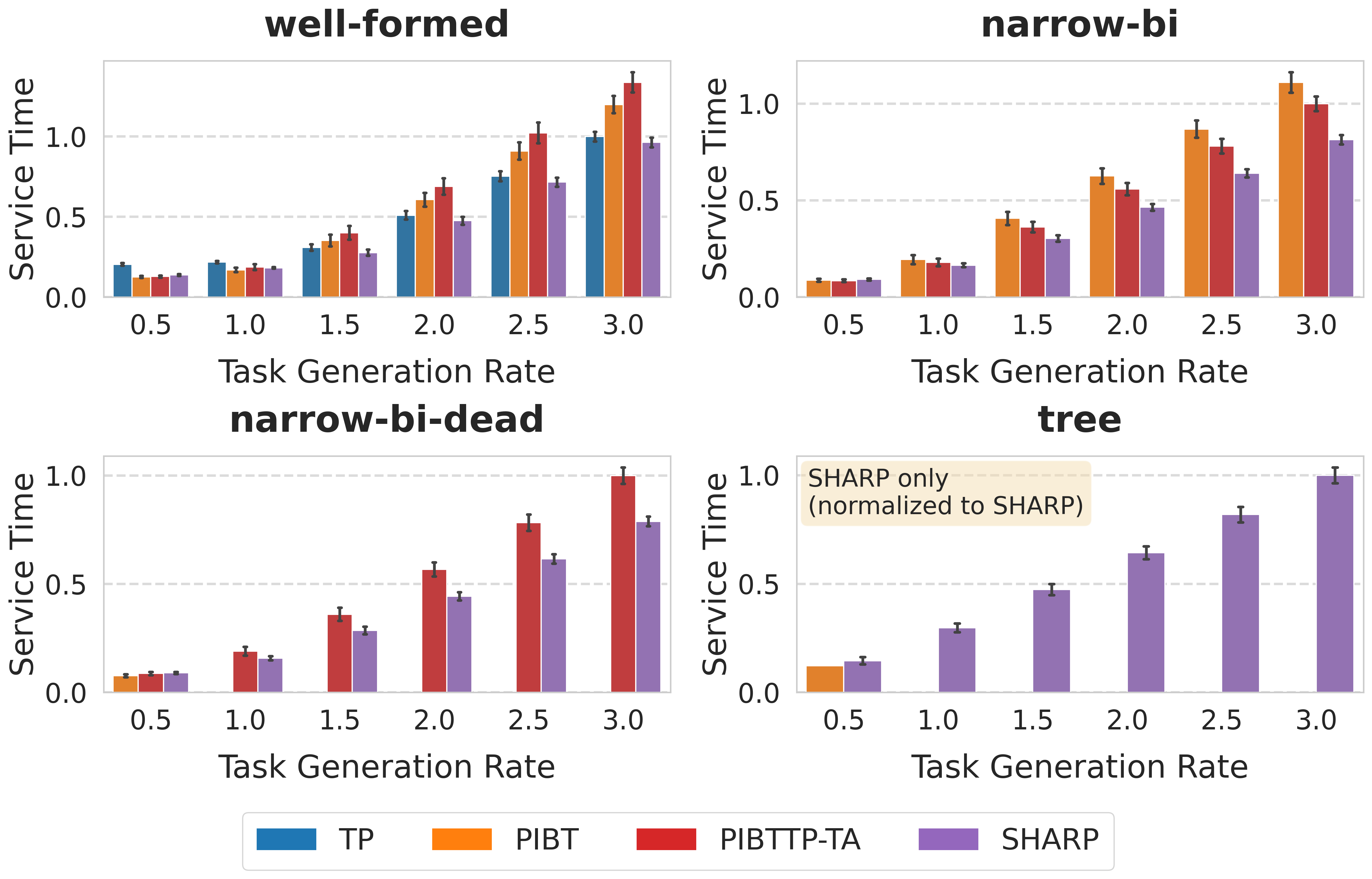}
\caption{Vs. task generation rate (30 agents).}
\label{fig:service_time}
\end{subfigure}
\hfill
\begin{subfigure}[t]{0.48\textwidth}
\centering
\includegraphics[width=\linewidth]{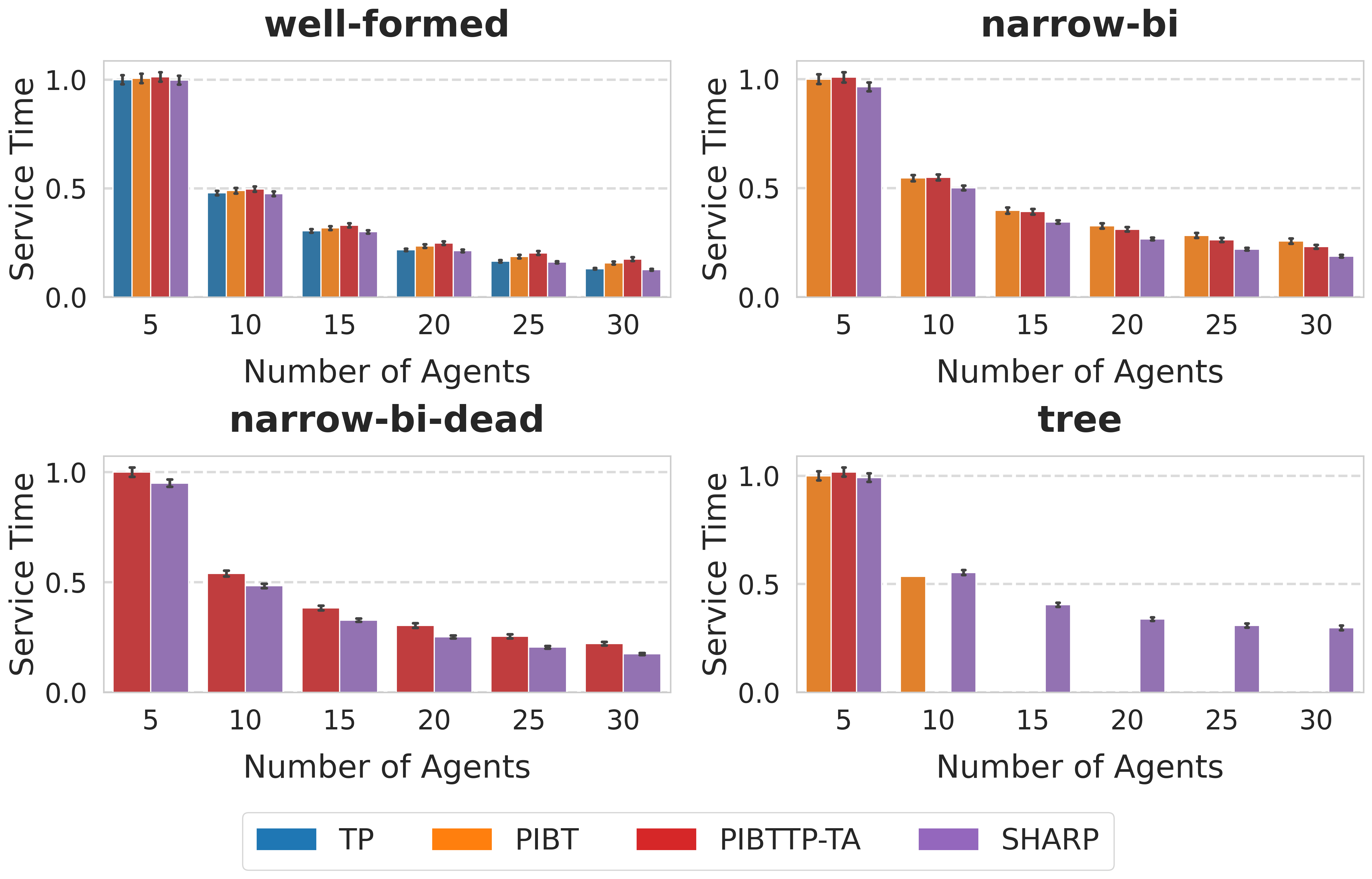}
\caption{Vs. number of agents ($\lambda=3.0$).}
\label{fig:service_time_agents}
\end{subfigure}
\caption{Normalized service time. Bars show successful-run means, error bars show one standard deviation, and missing bars mean no successful runs.}
\label{fig:service_time_pair}
\end{figure*}

\begin{figure*}[!tbp]
\centering
\begin{subfigure}[t]{0.48\textwidth}
\centering
\includegraphics[width=\linewidth]{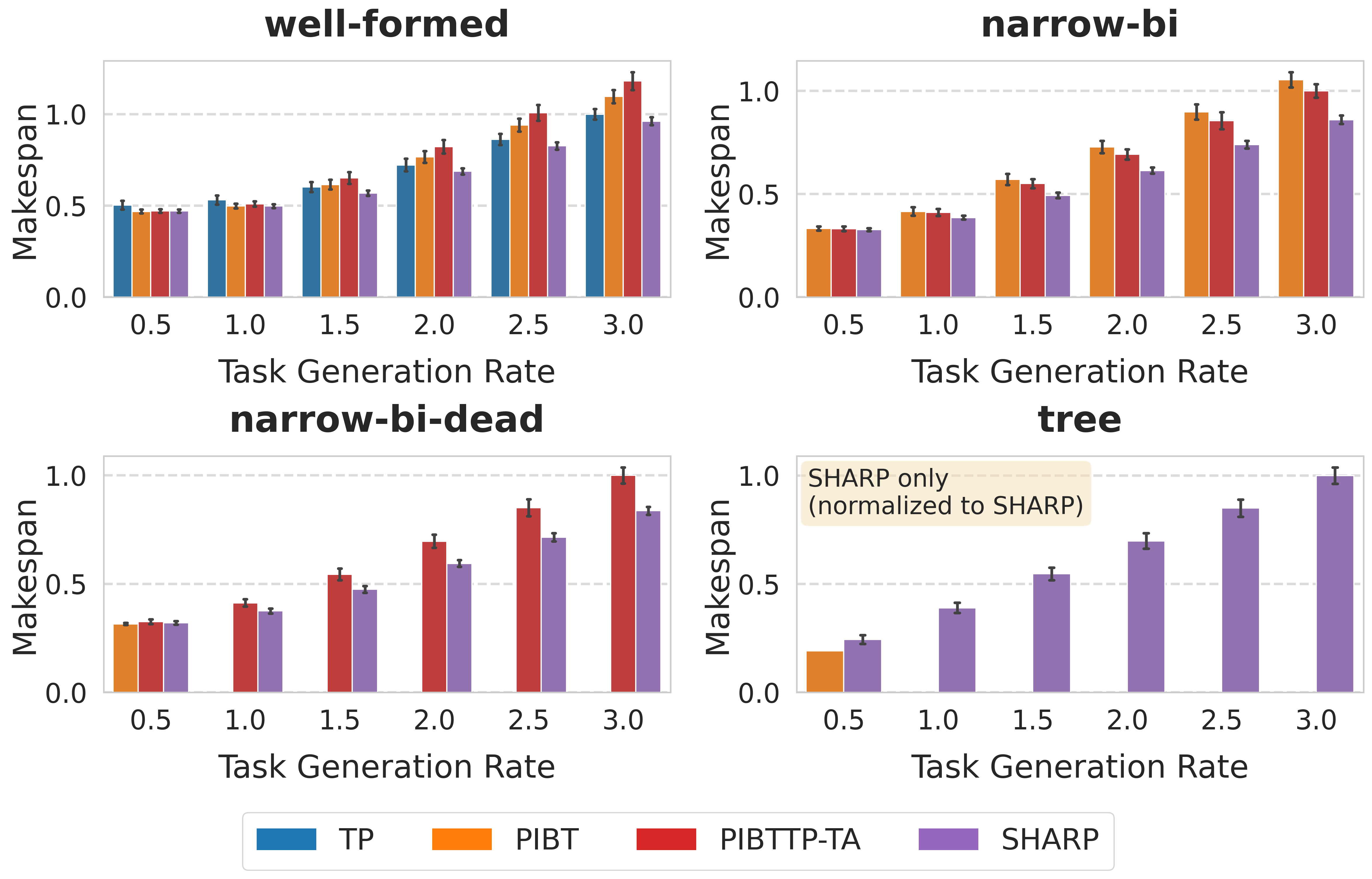}
\caption{Vs. task generation rate (30 agents).}
\label{fig:makespan}
\end{subfigure}
\hfill
\begin{subfigure}[t]{0.48\textwidth}
\centering
\includegraphics[width=\linewidth]{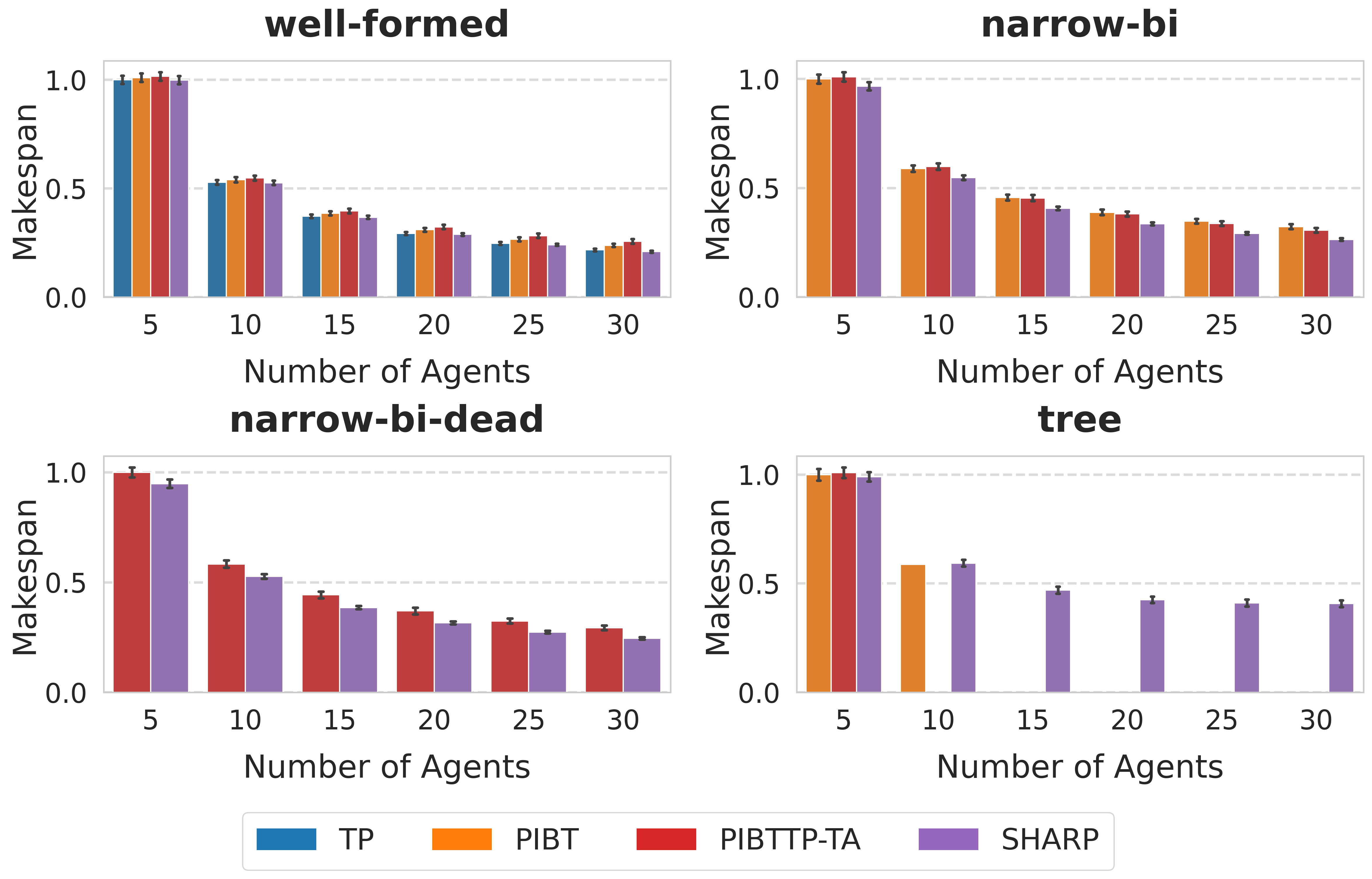}
\caption{Vs. number of agents ($\lambda=3.0$).}
\label{fig:makespan_agents}
\end{subfigure}
\caption{Normalized makespan, with the same plotting conventions as Figure~\ref{fig:service_time_pair}.}
\label{fig:makespan_pair}
\end{figure*}

\subsection{TP-Style Fixed-Home-Return Counterfactual}

Tables~\ref{tab:home_return_counterfactual} and~\ref{tab:home_return_counterfactual_tree} isolate the effect of fixed-home return validation.

\begin{table}[!tbp]
\centering
\caption{TP-style fixed-home-return comparison on Ma et al. well-formed 30-agent conditions (30 paired seeds). Values are absolute means; lower is better except success; runtime is in seconds.}
\label{tab:home_return_counterfactual}
{\scriptsize
\setlength{\tabcolsep}{2.5pt}
\renewcommand{\arraystretch}{1.03}
\begin{tabular}{llrrrr}
\hline
\multicolumn{6}{c}{\textbf{Well-formed (absolute means)}} \\
\hline
$\lambda$ & Method & Succ. & Makespan & Service & Runtime \\
\hline
\multirow{4}{*}{1.0}
& TP & 100.0 & 370.8 & 39.1 & 2.53 \\
& TP-home-return & 100.0 & 537.9 & 120.3 & 2.42 \\
& TP-SIPP-home-return & 100.0 & 538.4 & 120.3 & 2.80 \\
& SHARP & 100.0 & \textbf{349.2} & \textbf{33.0} & 2.86 \\
\hline
\multirow{4}{*}{3.0}
& TP & 100.0 & 702.0 & 180.8 & 4.53 \\
& TP-home-return & 100.0 & 1511.9 & 579.3 & 8.81 \\
& TP-SIPP-home-return & 100.0 & 1509.9 & 579.1 & 10.39 \\
& SHARP & 100.0 & \textbf{673.5} & \textbf{173.9} & 7.56 \\
\hline
\end{tabular}
}
\end{table}

\begin{table}[!tbp]
\centering
\caption{Tree subset of the TP-style fixed-home-return comparison on 30-agent conditions (30 paired seeds). Ratios are normalized to TP-home-return = 1.00; lower is better except success.}
\label{tab:home_return_counterfactual_tree}
{\scriptsize
\setlength{\tabcolsep}{2.5pt}
\renewcommand{\arraystretch}{1.03}
\begin{tabular}{llrrrr}
\hline
\multicolumn{6}{c}{\textbf{Tree (ratios)}} \\
\hline
$\lambda$ & Method & Succ. & Makespan & Service & Runtime \\
\hline
\multirow{4}{*}{1.0}
& TP & 0.0 & -- & -- & -- \\
& TP-home-return & 100.0 & 1.00 & 1.00 & 1.00 \\
& TP-SIPP-home-return & 100.0 & 1.01 & 1.00 & 1.34 \\
& SHARP & 100.0 & \textbf{0.79} & \textbf{0.53} & 20.46 \\
\hline
\multirow{4}{*}{3.0}
& TP & 0.0 & -- & -- & -- \\
& TP-home-return & 100.0 & 1.00 & 1.00 & 1.00 \\
& TP-SIPP-home-return & 100.0 & 1.00 & 1.00 & 1.31 \\
& SHARP & 100.0 & \textbf{0.73} & \textbf{0.54} & 19.04 \\
\hline
\end{tabular}
}
\end{table}

The well-formed absolute values come from the paired 30-seed home-return sweep and therefore differ slightly from the 100-run main-sweep means.
On well-formed 30-agent conditions, all methods achieve 100\% success and SHARP has the best task metrics.
Naive fixed-home return is costly: at $\lambda=3.0$, TP-home-return more than doubles makespan and more than triples service time relative to TP because away-from-home agents must first return before competing.
This penalty is not inherent to reserving a return suffix; it comes from the diagnostic TP semantics that exclude an away-from-home agent from assignment until it has physically reached home.
SHARP keeps the fixed-return safety role but evaluates all idle-or-retreating agents and can overwrite a retreat suffix with a newly validated pickup-delivery-Haven route.
Friedman tests were significant in all four map-rate conditions (max observed $p$-value $< 1.7 \times 10^{-10}$), and all SHARP-versus-home-return comparisons for makespan and service time remained significant after Bonferroni correction (max observed $p$-value $< 1.8 \times 10^{-6}$).
On tree conditions, TP has no successful runs, both fixed-home-return variants recover 100\% success under this TP-style full-route-validation policy, and SHARP further improves normalized makespan/service at substantially higher run-level runtime.

The no-overwrite variant further separates robustness from efficiency.
Table~\ref{tab:no_overwrite_ablation} shows the high-load tree result.

\begin{table}[!tbp]
\centering
\caption{No-overwrite variant on the high-load tree condition (30 agents, $\lambda=3.0$, 30 paired seeds). Ratios are normalized to SHARP = 1.00; lower is better except success.}
\label{tab:no_overwrite_ablation}
{\scriptsize
\setlength{\tabcolsep}{3pt}
\renewcommand{\arraystretch}{1.03}
\begin{tabular}{lcc}
\hline
Metric & SHARP & No-overwrite \\
\hline
Success (\%) & 100 & 100 \\
Makespan & 1.00 & 1.53 \\
Service time & 1.00 & 1.89 \\
Travel distance & 1.00 & 1.53 \\
ms/step & 1.00 & 1.43 \\
Planner calls & 1.00 & 1.24 \\
Expanded nodes & 1.00 & 1.44 \\
Failed validations & 1.00 & 1.27 \\
\hline
\end{tabular}
}
\end{table}

Both variants retain 100\% success, but disabling mid-retreat reassignment worsens all reported task-efficiency and planning-effort ratios; all 30 paired seeds worsen for makespan and service time (one-sided Wilcoxon $p=9.31\times 10^{-10}$).
Because fleet-wide scanning over eligible idle agents is retained, this ablation isolates only the within-SHARP penalty of disabling suffix overwrite; token ordering and non-fleet-wide assignment remain future work.

\subsection{Planning Overhead and Runtime Tradeoff}

\begin{table}[!tbp]
\centering
\caption{Representative SHARP planner effort in the main sweep, reported as mean per successful run. Calls are SIPP segment searches, Expanded is the number of expanded SIPP nodes, and Failed is the number of rejected SIPP segment searches rather than failed simulation runs. Smaller values indicate less search effort.}
\label{tab:search_effort}
{\scriptsize
\setlength{\tabcolsep}{2pt}
\renewcommand{\arraystretch}{1.04}
\begin{tabularx}{\columnwidth}{@{}>{\raggedright\arraybackslash}X r r r@{}}
\hline
Setting & Calls & Expanded & Failed \\
\hline
\shortstack[l]{well-formed\\30 agents, $\lambda=3.0$}
& 4,201 & 0.72M & 0.75 \\
\shortstack[l]{narrow-bi-dead\\30 agents, $\lambda=3.0$}
& 4,249 & 0.84M & 102.19 \\
\shortstack[l]{tree\\30 agents, $\lambda=1.0$}
& 19,530 & 61.08M & 9513.75 \\
\shortstack[l]{tree\\30 agents, $\lambda=3.0$}
& 56,964 & 185.98M & 27790.61 \\
\hline
\end{tabularx}
}
\end{table}

Table~\ref{tab:search_effort} shows the cost hidden by the success heatmap: tree layouts dominate the search overhead.
The fixed-home-return counterfactual shows that this is not only a matter of longer return paths: SHARP obtains better tree task efficiency than the TP-style fixed-home-return baselines, but at substantially higher planning cost.
In deployments, this layer is most plausible when dispatch is slower than low-level control and optimized SIPP kernels, route caches, and reservation-query indexes exist.

\section{Deployment Assumptions and Limitations}
SHARP assumes a centralized fleet manager with a global reservation table and synchronized route execution, so comparisons to decentralized methods such as PIBT speak primarily to \emph{structural applicability}.
The formal guarantee covers fixed exclusive Havens, zero pickup/delivery dwell duration, finitely many task releases, and the fixed-Haven commitment model stated in Section~\ref{sec:framework}.
Task-swap extensions, deterministic nonzero dwell times, stochastic delays, dynamic Haven reassignment, or continuous free-space motion without a stable guidepath graph would require additional mechanisms and proofs.
Reported runtimes are averages, not per-step tail latencies; deployments with hard dispatch deadlines would need tail-latency measurement and optimized planning kernels.
The TP-home-return and no-overwrite diagnostics are not exhaustive, and the tree layouts should be read as tree-like storage/retrieval abstractions rather than copied floor plans; broader decompositions across token ordering, assignment scope, and map classes remain future work.
SHARP is therefore most attractive when natural per-agent home cells or docks already exist and the deployment can afford centralized planning.

\section{Conclusion}

At the framework level, fixed-Haven reservation provides a finite-release-complete coordination framework under Haven-Reachability and explicit planning/progress assumptions.
At the algorithm level, SHARP realizes this idea by maintaining a retreat reservation to a fixed Haven and, against representative TP and PIBT-family MAPD baselines in the main sweep, attains 100\% success on all tested layouts.
The targeted counterfactual shows that fixed-home return with full-route validation under TP-style semantics is sufficient to recover robustness on the representative tree cases tested here, suggesting that the fixed-return contract is a central robustness mechanism in those conditions.
The no-overwrite variant further shows that mid-retreat suffix overwrite is not needed for robustness in the tested high-load tree condition, but it is associated with large service-time and makespan gains while fleet-wide scanning over eligible idle agents is retained.
SHARP's algorithmic contribution is therefore to exploit the fixed-return contract more efficiently, while exposing a substantial centralized planning cost.
Future work will relax the fixed-Haven contract, support richer task-duration models, and weaken support-region conditions for partially used guidepath graphs.

\FloatBarrier
\bibliography{aaai2026}

\end{document}